\pgfplotsset{width=7cm,compat=1.18,
colormap={mycolormap}{color=(black) color=(black!20!white)}}
\definecolor{refcolor}{RGB}{160,35,0}
\definecolor{hrefcolor}{RGB}{0,35,190}
\definecolor{greenPsi}{rgb}{0.0, 0.375, 0.0}
\definecolor{blueStruct}{rgb}{0.0, 0.0, 1.0}
\definecolor{redStruct}{rgb}{1.0, 0.0, 0.0}
\newcommand{\boxRule}{0.15mm}
\newcommand{\boxIndent}{15pt}
\newcommand{\remColor}{green}
\newcommand{\quoteColor}{green!50!yellow}
\newcommand{\questColor}{red}
\newcommand{\figColor}{orange}
\newcommand{\tabColor}{green}
\newcommand{\controlColor}{yellow}
\newcommand{\abstractColor}{blue!80!cyan}
\newenvironment{frameEnv}[1]
	{\begin{tcolorbox}[breakable,enhanced,toprule at break=0pt,bottomrule at break=0pt,before skip balanced=0.3cm,boxrule=\boxRule,left=0.75mm,right=0.75mm,frame hidden,borderline north = {\boxRule}{0pt}{#1!50!black}, borderline south = {\boxRule}{0pt}{#1!50!black},arc=0mm,colframe=#1!50!black,colback=#1!10,before upper={\parindent\boxIndent}]}
	{\end{tcolorbox}}
\newenvironment{frem}
	{\begin{frameEnv}{\remColor}}
	{\end{frameEnv}}
\newenvironment{fquest}
	{\begin{frameEnv}{\questColor}}
	{\end{frameEnv}}
\newenvironment{fquote}
	{\begin{frameEnv}{\quoteColor}}
	{\end{frameEnv}}
	\newenvironment{frameEnvMargin}[1]
	{\begin{tcolorbox}[breakable,enhanced,toprule at break=0pt,bottomrule at break=0pt,before skip balanced=0.3cm,boxrule=\boxRule,left=0.75mm,right=0.75mm,top=5mm,bottom=5mm,frame hidden, borderline north = {\boxRule}{0pt}{#1!50!black}, borderline south = {\boxRule}{0pt}{#1!50!black},arc=0mm,colframe=#1!50!black,colback=#1!10,before upper={\parindent\boxIndent}]}
	{\end{tcolorbox}}
\newenvironment{fabstract}
	{\begin{frameEnvMargin}{\abstractColor}\begin{abstract}}
	{\end{abstract}\end{frameEnvMargin}}
\newtheorem{theorem}{Theorem}
\theoremstyle{remark}
\numberwithin{proofStep}{theorem} 
\theoremstyle{definition}
\renewcommand{\thedefCustom}{\arabic{definition}}
\newcommand{\setdefCustomtag}[1]{
  \let\oldthedefCustom\thedefCustom
  \renewcommand{\thedefCustom}{#1}
  \g@addto@macro\enddefCustom{
    \global\let\thedefCustom\oldthedefCustom}
  }
\theoremstyle{definition}
\newtheorem{observation}{Observation}
\newtheorem{experiment}{Experiment}
\renewcommand{\thecondition}{\arabic{condition}}
\newcommand{\setconditiontag}[1]{
  \let\oldthecondition\thecondition
  \renewcommand{\thecondition}{#1}
  \g@addto@macro\endcondition{
    \global\let\thecondition\oldthecondition}
  }
\newtheorem{proposition}{Proposition}
\newtheorem{corollary}{Corollary}
\newtheorem{definition}{Definition}
\theoremstyle{remark}
\newtheorem{remark}{Remark}
\newtheorem{example}{Example}
\newtheorem{objectType}{Type}
\newtheorem{objectTypeCustom}{Type}
\newcommand{\setobjectTypeCustomtag}[1]{
  \let\oldtheobjectTypeCustom\theobjectTypeCustom
  \renewcommand{\theobjectTypeCustom}{{#1}}
  \g@addto@macro\endobjectTypeCustom{
    \global\let\theobjectTypeCustom\oldtheobjectTypeCustom}
  }
\newcommand{\orcid}[1]{\href{https://orcid.org/#1}{\textcolor[HTML]{A6CE39}{\aiOrcid}}}
\def\({\left(}
\def\){\right)}
\def\[{\left[}
\def\]{\right]}
\newcommand{\tn}{\textnormal}
\newcommand{\hilbert}{\mathcal{H}}
\newcommand{\mc}[1]{\mathcal{#1}}
\newcommand{\wh}[1]{\widehat{#1}}
\newcommand{\wt}[1]{\widetilde{#1}}
\newcommand{\R}{\mathbb{R}}
\newcommand{\C}{\mathbb{C}}
\newcommand{\Z}{\mathbb{Z}}
\newcommand{\N}{\mathbb{N}}
\newcommand{\abs}[1]{\left|#1\right|}
\newcommand{\de}{\operatorname{d}}
\newcommand{\etc}{\textit{etc}}
\newcommand{\stab}{\operatorname{Stab}}
\newcommand{\orb}{\operatorname{Orb}}
\newcommand{\SL}{\tn{SL}}
\newcommand{\OO}{\tn{O}}
\newcommand{\SO}{\tn{SO}}
\newcommand{\Spin}{\tn{Spin}}
\newcommand{\U}{\tn{U}}
\newcommand{\SU}{\tn{SU}}
\newcommand{\schrod}{Schr\"odinger}
\newcommand{\bra}[1]{\langle#1|}
\newcommand{\ket}[1]{|#1\rangle}
\newcommand{\braket}[2]{\langle#1|#2\rangle}
\newcommand{\rayket}[1]{\left|#1\right)}
\newcommand{\raybra}[1]{\left(#1\right|}
\newcommand{\raybraket}[2]{\left(#1|#2\right)}
\newcommand{\x}{\mathbf{x}}
\newcommand{\n}{\mathbf{n}}
\newsavebox\affbox
\author{Cristi Stoica\ \orcidlink{0000-0002-2765-1562}}
\affil{Dept. of Theoretical Physics, NIPNE---HH, Bucharest, Romania.\\
Email: \textit{\color{cyan}\href{mailto:cristi.stoica@theory.nipne.ro}{cristi.stoica@theory.nipne.ro},  \href{mailto:holotronix@gmail.com}{holotronix@gmail.com}}}
\newcommand*\@secondofsix[6]{#2}
\newcommand{\addtotitleformat}{%
  \@ifstar{\addtotitleformat@star}{\addtotitleformat@nostar}}
\newcommand\addtotitleformat@nostar[2]{%
  \PackageError{titlesec}{non starred form of \string\addtotitleformat\space not supported}{}}
\newcommand\addtotitleformat@star[2]{%
  \expandafter\expandafter\expandafter\expandafter
  \expandafter\expandafter\expandafter\def
  \expandafter\expandafter\expandafter\expandafter
  \expandafter\expandafter\expandafter\@currentsection@font
  \expandafter\expandafter\expandafter\expandafter
  \expandafter\expandafter\expandafter{%
    \expandafter\expandafter\expandafter\@secondofsix
       \csname ttlf@\expandafter\@gobble\string#1\endcsname}%
  \titleformat*{#1}{\@currentsection@font#2}%
}
\titlespacing\section{0pt}{20pt plus 6pt minus 4pt}{16pt plus 4pt minus 4pt}
\titlespacing\subsection{12pt}{12pt plus 6pt minus 4pt}{10pt plus 4pt minus 4pt}
\titlespacing\subsubsection{12pt}{12pt plus 6pt minus 4pt}{10pt plus 4pt minus 4pt}
\titleformat{\section}{\normalfont\fontsize{16}{24}\bfseries}{\thesection.}{1em}{}
\titleformat{\subsection}{\normalfont\fontsize{14}{20}\bfseries}{\thesubsection.}{1em}{}
\titleformat{\subsubsection}{\normalfont\fontsize{13}{18}\bfseries}{\thesubsubsection.}{1em}{}
\titleformat{\author}{\normalfont\fontsize{14}{20}\bfseries}{\thesection}{1em}{}
\renewcommand{\thesubsection}{\arabic{section}.\arabic{subsection}}
\definecolor{titcolor}{RGB}{0,90,255}
\addtotitleformat*{\section}{\Large\sffamily\color{titcolor}}
\addtotitleformat*{\subsection}{\large\sffamily\color{titcolor}}
\addtotitleformat*{\subsubsection}{\large\sffamily\color{titcolor}}
\title{\color{titcolor}\textbf{Is the Wavefunction Already an Object on Space?}\footnote{Published in Stoica, O.C. \emph{Is the Wavefunction Already an Object on Space?} Symmetry 2024, 16(10), 1379.}}
\date{\small\today} 
\begin{document}

\pagestyle{headings}	
\newpage
\setcounter{page}{1}
\renewcommand{\thepage}{\arabic{page}}

\maketitle

\begin{fabstract}
Since the discovery of quantum mechanics, the fact that the wavefunction is defined on the $3\mathbf{n}$-dimensional configuration space rather than on the $3$-dimensional space has seemed uncanny to many, including Schr\"odinger, Lorentz, and Einstein. Even today, this continues to be seen as a significant issue in the foundations of quantum mechanics.
In this article, it will be shown that the wavefunction is, in fact, a genuine object on space. While this may seem surprising, the wavefunction does not possess qualitatively new features that were not previously encountered in objects known from Euclidean geometry and classical physics. The methodology used involves finding equivalent reinterpretations of the wavefunction exclusively in terms of objects from the geometry of space. The result is that we will find the wavefunction to be equivalent to geometric objects on space in the same way as was always the case in geometry and physics. This will be demonstrated to hold true from the perspective of Euclidean geometry, but also within Felix Klein's Erlangen Program, which naturally fits into the classification of quantum particles by the representations of spacetime isometries, as realized by Wigner and Bargmann, adding another layer of confirmation. These results lead to clarifications in the debates about the ontology of the wavefunction. From an empirical perspective, we already take for granted that all quantum experiments take place in space. I suggest that the reason why this works is that they can be interpreted naturally and consistently with the results presented here, showing that the wavefunction is an object on space.\end{fabstract}


\maketitle


\section{Introduction}
\label{s:intro}

In nonrelativistic quantum mechanics (NRQM), the wavefunction of $\n$ particles is defined on the $3\n$-dimensional configuration space $\R^{3\n}$,
\begin{equation}
\label{eq:wavefunction}
\begin{split}
&\psi:\R^{3\n}:=\underbrace{\R^3\times\ldots\times\R^n}_{\n\tn{ times}}\to\C \\
&\psi(\x_1,\ldots,\x_{\n})\in\C.
\end{split}
\end{equation}

While $\x_1,\ldots,\x_{\n}\in\R^3$ are points in space, at least in this representation, the wavefunction cannot be understood as defined at each point of space, but as a complex function defined on the configuration space $\R^{3\n}$.

Since the dawn of quantum mechanics (QM), this has been seen as problematic by many, including {\schrod} \citep{BacciagaluppiValentini2009SolvayConference}, Lorentz (\citealp{Przibram1967LettersWaveMechanics}, p. 44), Einstein \citep{Howard1990EinsteinWorriesQM,FineBrown1988ShakyGameEinsteinRealismQT}, Heisenberg, Bohm \citep{Bohm2004CausalityChanceModernPhysics}, etc.
For example, in a letter to {\schrod}, Lorentz wrote about how satisfied he was with {\schrod}'s wave mechanics over Heisenberg's matrix mechanics, but complained about the apparent impossibility of interpreting the wavefunction as a physical wave (\citealp{Przibram1967LettersWaveMechanics}, p. 43--44):

\begin{fquote}
If I had to choose now between your wave mechanics and the matrix mechanics, I would give the preference to the former, because of its greater intuitive clarity, so long as one only has to deal with the three coordinates $x$, $y$, $z$. If, however, there are more degrees of freedom, then I cannot interpret the waves and vibrations physically, and I must therefore decide in favor of matrix mechanics.
\end{fquote}

{\schrod} himself was deeply dissatisfied, considering the representation of the wavefunction on the configuration space ``only as a mathematical tool'' (\citealp{BacciagaluppiValentini2009SolvayConference}, p. 477):

\begin{fquote}
Of course this use of the $q$-space is to be seen only as a mathematical tool, as it is often applied also in the old mechanics; ultimately, in this version also, the process to be described is one in space and time. In truth, however, a complete unification of the two conceptions has not yet been achieved. Anything over and above the motion of a single electron could be treated so far only in the \emph{multi}-dimensional version...
\end{fquote}

He expressed more similar worries in \citep{Przibram1967LettersWaveMechanics,Przibram2011LettersWaveMechanics,Schrodinger1926QuantisierungAlsEigenwertproblem,Schrodinger1982CollectedPapersWaveMechanics}.

Similarly, Einstein wrote to Ehrenfest (\citealp{Howard1990EinsteinWorriesQM}, 28 August 1926) that

\begin{fquote}
{\schrod} is, in the beginning, very captivating. But the waves in $n$-dimensional coordinate space are indigestible.
\end{fquote}

Einstein voiced similar worries in letters to Lorentz (1 May 1926), Ehrenfest (18 June 1926), Lorentz (22 June 1926), Sommerfeld (21 August 1926), and Lorentz (16 February 1927) \citep{Howard1990EinsteinWorriesQM}, and also in \citep{FineBrown1988ShakyGameEinsteinRealismQT}.
Bohm had similar concerns {\citep{Bohm2004CausalityChanceModernPhysics}}.

Today, this is still considered to be a problem, because little progress in understanding and solving it has been made \citep{Monton2006QM3Nspace,NeyAlbert2013TheWaveFunctionEssaysOnTheMetaphysicsOfQuantumMechanics,Norsen2017FoundationsQM,Gao2017MeaningWavefunction,Maudlin2019PhilosophyofPhysicsQuantumTheory}.

But many modern physicists do not seem to worry about this, because the theory can be used successfully to explain the results of experiments, and maybe this gives us the feeling that there is no problem and no need for deeper understanding.

Arguments that we should simply embrace the fact that the wavefunction is defined on the configuration space were given in \citep{DavidAlbert1996ElementaryQuantumMetaphysics,Loewer1996HumeanSupervenience,AlbertLoewer1996TailsSchrodingerCat,Lewis2004LifeConfigurationSpace,Ney2012TheStatusOfOurOrdinaryThreeDimensionsInAQuantumUniverse,Ney2013OntologicalReductionWavefunctionOntology,North2013StructureOfQuantumWorld,Albert2019How2TeachQM}, in particular in the context of Everett's interpretation \citep{Barrett1999TheQuantumMechanicsOfMindsAndWorlds,Wallace2002WorldsInMWI,Wallace2003EverettAndStructure,BrownWallace2005BohmVsEverett,Barrett2017TypicalWorlds,SEP-Vaidman2021MWI}. Objections were raised in \citep{Monton2002WavefunctionOntology,Monton2006QM3Nspace,Maudlin2007CompletenessSupervenienceOntology,Allori2008CommonBMandGRW,Maudlin2010CanTheWorldBeOnlyWavefunction,Maudlin2013TheNatureOfTheQuantumState,Monton2013Against3NSpace,EKChen2017OurFundamentalPhysicalSpace,Emery2017AgainstRadicalQuantumOntologies,Maudlin2019PhilosophyofPhysicsQuantumTheory}. More discussions of the problem can be found in \citep{NeyAlbert2013TheWaveFunctionEssaysOnTheMetaphysicsOfQuantumMechanics}.

Let us see what various modern and contemporary researchers think about this problem. According to Bell, it is important to explain everything in terms of ``elements of reality'' associated with bounded regions of space or spacetime. For these elements of reality or properties, he uses the term ``local beables'' (to contrast them from ``observables'', which are not necessarily localized) {\citep{Bell2004TheTheoryOfLocalBeables}}:

\begin{fquote}
We will be particularly concerned with \emph{local} beables, those which (unlike, for example, the total energy) can be assigned to some bounded space-time region.
\end{fquote}

Bell used the idea of local beables to give a clear formulation of his theorem {\citep{Bell64BellTheorem}}, and to argue that the wavefunction, which he considered to be a beable, is not a local beable, resulting, upon the wavefunction collapse, in violations of local causality {\citep{Bell2004LaNouvelleCuisine}}.
Moreover, no observable of the wavefunction as it is can be a local beable, because the wavefunction can be a superposition of wavefunctions which are eigenstates corresponding to different eigenvalues of that observable.
Then, Bell argued numerous times that local beables should be in addition to the wavefunction, for example, as point-particles with definite positions, as in the pilot-wave theory (PWT), or as the points in space corresponding to the spontaneous collapse of the wavefunction in the Ghirardi--Rimini--Weber proposal (GRW), resulting in the GRW theory with the flash ontology (GRWf) \citep{Bell2004SpeakableFlashOntology}.

Bell's school of thought was born. It is based on making a case against interpreting the wavefunction as providing the complete ontology, in particular against Everett's interpretation of QM, while arguing that the point-particles in the PWT or GRW interpretation of QM with the flash ontology achieve the desired space ontology \citep{Maudlin2010CanTheWorldBeOnlyWavefunction,Maudlin2019PhilosophyofPhysicsQuantumTheory,Norsen2017FoundationsQM}.
Another sub-school denies the reality of the wavefunction, considering it to be nomological, i.e., having the same ontological status as a physical law. In PWT, the wavefunction participates in the law as expressed in the guiding equation {\citep{GoldsteinZanghi2013RealityAndRoleOfWavefunction}}, while, in the GRWf, it governs the probability distribution of the occurrence of the next spontaneous collapse \mbox{\citep{Allori2013PrimitiveOntologyAndTheStructureOfFundamentalPhysicalTheories}.}

Another argument is along the line that, since the wavefunction is an object on the configuration space, it cannot, at least not in an obvious way, result in spatial objects like chairs, tables, and tigers. By contrast, interpretations of QM like PWT and GRW theory include an additional ontology specific to the three-dimensional space, and therefore they account for the manifest image in a straightforward way, just like classical physics does.

A very good summary of the status of these problems is given by Ney in {\citep{Ney2012TheStatusOfOurOrdinaryThreeDimensionsInAQuantumUniverse}}:

\begin{fquote}
Recently, two strategies to address this question have emerged. First, Tim Maudlin, Valia Allori, and her collaborators argue that what I have just called the `most straightforward' interpretation of quantum mechanics is not the correct one. Rather, the correct interpretation of realist quantum mechanics has it describing the world as containing objects that inhabit the ordinary three-dimensional space of our manifest image. By contrast, David Albert and Barry Loewer maintain the straightforward, wavefunction ontology of quantum mechanics, but attempt to show how ordinary, three-dimensional space may in a sense be contained within the high-dimensional configuration space the wavefunction inhabits.
\end{fquote}

In the literature discussing these arguments, we can disentangle three different problems attributed to the wavefunction:
\begin{fquest}
\begin{enumerate}
	\item 
Can the wavefunction be an object in three-dimensional space?
	\item 
Can the wavefunction account for the manifest image familiar to our experience of the world?
	\item
Can the wavefunction provide a locally causal picture of the world?
\end{enumerate}
\end{fquest}

The aim of this article is only to address the first of these questions, by showing that it is an object on the three-dimensional space. I will not try to show here how the manifest image, containing spatial objects like chairs and tables and tigers, emerges from the wavefunction.
This has been discussed extensively and convincingly in the literature, notably in \citep{Wallace2012TheEmergentMultiverseQuantumTheoryEverettInterpretation,Vaidman2022WaveFunctionRealismAnd3Dimensions}.
And, while the fact that interactions take place in space plays a central role in \citep{DavidAlbert1996ElementaryQuantumMetaphysics,Vaidman2022WaveFunctionRealismAnd3Dimensions}, the existence of a three-dimensional manifest image does not even require the wavefunction to be an object in space; it only requires it to be able to produce the appearance of such objects. Therefore, the fact proposed here that the wavefunction already is an object on space is not even necessary to solve the problem of the manifest image. I hope that this result, proved here in a classical geometric sense, will increase confidence in the fact that the wavefunction can account for the physical reality.

I will not attempt to give a locally causal picture of the world in this article either. This discussion is neutral with respect to wavefunction collapse, being consistent both with the many-worlds interpretation (MWI) and with the wavefunction collapse, and, in fact, with any interpretation of QM, since all of them require the existence of the wavefunction {\citep{PBR2012RealityOfPsi}}. However, as long as no collapse takes place, the interactions encoded in the wavefunction are local, if we understand it in a particular way as a field with infinitely many components in a fiber bundle, as explained in {\citep{Stoica2019RepresentationOfWavefunctionOn3D}}. If the wavefunction collapse is assumed to take place, it breaks the locality. In fact, some researchers think that the MWI can avoid violations of local causality without appealing to the monstrous construction from {\citep{Stoica2019RepresentationOfWavefunctionOn3D}}. According to Vaidman {\citep{SEP-Vaidman2021MWI}},

\begin{fquote}
The MWI does not have action at a distance. The most celebrated example of nonlocality of quantum mechanics given by Bell's theorem in the context of the Einstein-Podolsky-Rosen argument cannot get off the ground in the framework of the MWI because it requires a single outcome of a quantum experiment, see the discussion in Bacciagaluppi 2002, Brown and Timpson 2016. Although the MWI removes the most bothersome aspect of nonlocality, action at a distance, the other aspect of quantum nonlocality, the nonseparability of remote objects manifested in entanglement, is still there. A ``world'' is a nonlocal concept. This explains why we observe nonlocal correlations in a particular world.
\end{fquote}

But the subject of this article is more specific, dealing only with the status of the wavefunction as an object on space, as understood in geometry or in classical physics.

A simple way to understand the wavefunction as an object in space is to interpret $\psi(\x_1,\ldots,\x_{\n})$ as a multifield, which is similar to a field but depends on multiple positions in space \citep{Forrest1988QuantumMetaphysics,Belot2012QuantumStatesPrimitiveOntologists}.
However, it can also be understood as an infinite number of classical fields on the regular three-dimensional space, interacting locally, in at least two different reconstructions. One is obtained by brute force \citep{Stoica2019RepresentationOfWavefunctionOn3D} and another one emerges naturally in the {\schrod} wavefunctional formulation of quantum field theory, at least in the case of bosonic fields \citep{Stoica2023TheRelationWavefunction3DSpaceMWILocalBeablesProbabilities}.

The brute force construction given in \citep{Stoica2019RepresentationOfWavefunctionOn3D} is performed in several steps. Recall that the total wavefunction $\psi(\x_1,\ldots,\x_{\n})$ is a linear superposition of products of wavefunctions of the individual particles, which are defined on the three-dimensional space. The terms in the superposition are of the form $\psi(\x_1)_1\cdot\ldots\cdot\psi_{\n}(\x_{\n})$. The first step of the construction is to collect these individual wavefunctions as the components of a field with infinitely many components, $\(\psi_1(\x_1),\ldots,\psi_{\n}(\x_{\n})\)$. But this contains too much information, because $\(c\psi_1(\x_1),\ldots,\psi_{\n}(\x_{\n})\)=\(\psi_1(\x_1),\ldots,c\psi_{\n}(\x_{\n})\)=c\(\psi_1(\x_1),\ldots,\psi_{\n}(\x_{\n})\)$, where $c\in\C$. This is resolved by introducing a global gauge symmetry. The next step is to build fields with even more components, so that they represent linear combinations of fields as in the first step and so that the wavefunctions belong to a fixed basis. Again, we need an even larger global gauge symmetry group. By promoting this global gauge symmetry to a local gauge symmetry and using a connection, the fields are made locally separable. This construction seems very artificial and monstrous, despite it being backed up by a relation between fiber bundles and tensor products of spaces of functions. There seems to be no significant conceptual gain in using it, except that it is a proof of the concept that all the information contained in the wavefunction can be seen as a very complicated field on space.
Also, calculations are much easier when using the wavefunction on the configuration space.

A more natural construction, based on the wavefunctional formulation of quantum field theory for bosonic fields, was presented in \citep{Stoica2023TheRelationWavefunction3DSpaceMWILocalBeablesProbabilities}. In this construction, a wavefunctional $\Phi$ on the configuration space of classical fields can be understood as a field with infinitely many components, with each of these components being a classical field configuration $\phi$ along with $\Phi[\phi]$, the value of the wavefunctional $\Phi$ for the field configuration $\phi$. However, it is unclear at this point how this would work for fermionic fields (Remark \ref{rem:thm:qf_space:fermions}).

Both these representations constitute a proof of concept, helping us to understand that the wavefunction may be just a simpler way to represent something that secretly is an object on space. It would be useful to also have a way of understanding this that is more similar to the way of understanding geometric objects in space
, at least for mathematicians, and with minimal reinterpretations of the wavefunction on the configuration space, which is an extremely useful concept
.
In this article, I will show that the wavefunction already is an object precisely like the objects in Euclidean geometry or in classical physics
.

The article is organized as follows: Section \ref{s:types} makes an inventory of the types of objects on space in geometry and in classical physics. Section \ref{s:wf_space} contains a proof that the wavefunction is such an object and does not have qualitatively new properties compared to other objects on space.
In Section \ref{s:klein}, it is shown how this is true also in the precise sense of Klein's Erlangen Program. Section \ref{s:wigner} discusses this result in relation to the classification of quantum particles by the representations of the spacetime isometries realized by Wigner and Bargmann. Section \ref{s:empirical} explains that the already existing empirical data are consistent with these conclusions, and they are naturally interpreted as about objects on space. The final section, Section \ref{s:discussion}, discusses these results in the context of the idea, raised by critics, that the wavefunction is sufficient in quantum mechanics.

\section{Euclidean Geometry Explanation}
\label{s:explanation_Euclidean}

This section contains an intuitive explanation of the fact that the wavefunction already is an object on space, based on Euclidean geometry. Symmetry groups, discussed in Section \ref{s:explanation_symmetry}, will deepen this insight.

\subsection{Types of Classical Objects on Space}
\label{s:types}

Based on Euclidean geometry and classical physics, let us see what kinds of objects count as objects on space.

\begin{frem}
\begin{objectType}[Subsets of space]
\label{type:space_subset}
The most commonly used objects in space are subsets of space. 
\end{objectType}
\end{frem}

\begin{example}[of Type \ref{type:space_subset}]
\label{ex:type:space_subset}
Lines, segments, and circles are subsets of the Euclidean plane $\R^2$. Technically, points are elements of the plane, not subsets. But since statements of the form ``let the point $A$ be the intersection of the lines $b$ and $c$'', meaning in fact that $\{A\}=a\cap b$, are usual, we will include both these usages in this same type.

In classical mechanics, particles are represented as points, and solid objects as subsets of space.
\end{example}

\begin{frem}
\begin{objectType}[Composite objects]
\label{type:space_collection_subsets}
A geometric object may be composite, i.e., it may consist of a collection of subsets of the geometric space without being identified with the union of those subsets.
\end{objectType}
\end{frem}

\begin{example}[of Type \ref{type:space_collection_subsets}]
\label{ex:type:space_collection_subsets}
Consider a triangle $ABC$ in the Euclidean plane. 
Let $M\in[BC]$ be the middle of the segment $[BC]$, and let $N\in[BC]$ so that the half-line $[AN$ is the bisector of the angle $\angle{CAB}$. Then, the half-lines $[AM$ and $[AN$ are distinct, even though, in the case when $|AB|=|AC|$, they coincide as subsets of the plane. Therefore, they should be considered as distinct objects composing the figure, and considering their union instead would miss this fact.

Even though the edges $[AB]$ and $[AC]$ of the triangle $ABC$ are subsets of the plane that share the point $A$, they are distinct objects. If the triangle is degenerate so that $B\in[AC]$, $[AB]\subset[AC]$, and it is even possible that $B=C$, in which case, $[AB]=[AC]$ as sets; however, theorems that apply to general triangles remain valid only if we consider the edges $[AB]$ and $[AC]$ as distinct objects.

Such extreme and degenerate situations occur, for instance, in problems of geometric locus, where it is essential to consider the distinct points as remaining distinct even when they happen to coincide. For example, given a segment $[BC]$ in the plane, the geometric locus of the point $A$, for which the angle $\angle{CAB}$ is right, is the circle with $[BC]$ as that diameter. But if we exclude the cases when $A=B$ and $A=C$, the locus is no longer a full circle.

Classical mechanisms can be composed of rigid bodies connected by gears, belts, cams, linkages, \etc. From a geometric point of view, they are Type \ref{type:space_collection_subsets} objects too.
Classical waves and fields can occupy common regions in space and interfere, yet they remain distinct and continue to propagate in a way that is unaffected. In particular, gravity is not affected by the presence of electromagnetic waves.
\end{example}

The composite geometric objects from Example \ref{ex:type:space_collection_subsets} are very common in geometry and classical physics. Identifying composite objects with the union of the subsets would collapse Type \ref{type:space_collection_subsets} into Type \ref{type:space_subset}. This would reduce the generality of the theorems in geometry and classical physics, and they would be broken into many very particular theorems with a reduced range of applicability, undermining universality. 
Often, geometric reasoning starts with extreme or degenerate cases, with the general solution resulting from continuity.
But Type \ref{type:space_collection_subsets} objects are not merely useful; they capture general geometric facts.

\begin{frem}
\begin{objectType}[Objects with additional properties]
\label{type:space_additional_properties}
If we associate to a geometric object an element $s$ from a set $S$, the result is still a geometric object. This also applies to the components of an object of Type \ref{type:space_collection_subsets}.
\end{objectType}
\end{frem}

\begin{example}[of Type \ref{type:space_additional_properties}]
\label{ex:type:space_additional_properties}
Two lines are congruent, but they are distinct objects, so labeling them helps us track their identity. However, it is not merely a matter of utility; labeling them reflects their distinct identities and meanings, like the half-line and the bisector from Example \ref{ex:type:space_collection_subsets}.

In classical mechanics, an object with a mass and an electric charge is still an object on space. A composite object can have a total mass and a total charge, but, at the same time, its constituent parts have their own masses and charges. Therefore, labeling geometric objects is not just a convention; it reflects real properties. In physics, it is arguable whether these properties are geometric in nature, but they are nevertheless ``on space''.
\end{example}

Before continuing, let us remember some notions relating to fiber bundles.
In classical physics, but mostly in differential geometry, we often deal with fields, for example, scalar, vector, and tensor fields.
Initially, mathematicians and physicists used to think of fields as functions on a manifold $M$ with values in a set $V$ which can be the field of real or complex numbers, a vector space, etc. A field was therefore defined as a function $f:M\to V$, so that $f(x)\in V$ for all $x\in M$. However, as we understood more, we realized that, in many cases, the set $V$ should be different for each $x\in M$, that is, $f(x)\in V_x$, where $V_x$ are all identical copies of the same set $V$. An example is when we deal with tangent vectors. For example, if $M$ is a curve, a vector that is tangent to the curve at a point $x$ belongs to a vector space with the origin at $x$. Therefore, there will be as many vector spaces as there are points in curve $M$
.
If we collect together all these vector spaces, we obtain a structure that looks, in each neighborhood $(a,b)$ of $x$, like the product $(a,b)\times V$. However, if the curve is closed, when gluing together the spaces of the form $(a,b)\times V$, we may obtain different results. For example, a vector field on $M$ may twist as we move around $M$. If it twists only half of a full rotation, we obtain a M\"obius strip.
Therefore, the notion of fiber bundle was discovered. In this example, a fiber bundle consists of the total space $\mc{B}$, which is obtained by gluing together spaces of the form $(a,b)\times V$ for open intervals $(a,b)$, that, together, cover the entire curve, and a projection map $\pi$, so that, if we are handed a vector $v_x$ from the vector space $V_x$, $\pi(v_x)=x$
.

In general, a fiber bundle consists of a manifold $\mc{B}$, a base space $M$, and a surjective function $\pi:\mc{B}\to M$, the projection map, so that $\pi^{-1}(x)$ is ``the same'' for all $x\in M$. The set $\pi^{-1}(x)$ is the fiber above the point $x$.
A field is a section of the bundle, which is just a map that associates to each point $x$ an element of the fiber over $x$.
We denote the fiber bundle by $\mc{B}\stackrel{\pi}{\mapsto} M$.

In the case of a curve $M$ in the plane $\R^2$, the vectors in the plane with an origin at a point $x$ of $M$ form a two-dimensional vector space $\R^2_x$, the total space $\mc{B}=M\times\R^2$, the projection $\pi$ takes a vector $(x,v_x)\in M\times\R^2$ into the point $x$, and the fiber bundle is $M\times\R^2\stackrel{\pi}{\mapsto} M$.

To differentiate a field of the bundle, we need a way to compare the value of the field at $x+dx$ with its value at $x$. This is provided by a connection, which specifies what happens to the vector or the tensor when we translate it along a curve. The connection can then be used to define the curvature of the bundle at a point $x$.
An example of a connection in physics is the Levi--Civita connection, necessary to derivate vector and tensor fields in general relativity and in Riemannian geometry.
Another example is the connection appearing in the gauge theory of electromagnetism, which uses a fiber bundle with the base manifold being the space or spacetime, and the typical fiber being the group $\U(1)$. The electromagnetic field turns out to be the curvature of the connection. When expressed in coordinates, the connection becomes the potential four-vector of the electromagnetic field.

For the purpose of this article, we do not need to go into more detail regarding these abstract notions, but it is important to understand that they are considered the object of local differential geometry; they are considered to be local objects {\citep{Guggenheimer1977DifferentialGeometry}}.

\begin{remark}
\label{rem:type:space_additional_properties}
One may argue that, except for in tracking the identity of geometric objects, additional properties are not genuine properties of Euclidean geometry because they are neither Type \ref{type:space_subset} nor Type \ref{type:space_collection_subsets}, which is true. Type \ref{type:space_additional_properties} includes additional elements that cannot be reduced to points in space. But this was never a reason to consider that properties like mass and electric charge in classical physics are not on space.

Consider the charge density. It depends on position, so it is a scalar field $\rho:\R^3\to \R$.
In general, classical physics contains among its objects various scalar, vector, and tensor fields.
This justifies the geometric interpretation of these fields as sections of various fiber bundles over the space $\R^3$.
For example, scalar fields are sections of the trivial bundle $\R^3\times\R\mapsto\R^3$, vector fields are sections of $\R^3\times\R^d\mapsto\R^3$, \etc.
Here, ``$\mapsto$'' represents the projection on the first component of the Cartesian product, and ``sections'' are continuous subsets of $\R^3\times\R$ or $\R^3\times\R^d$ whose points are put into a one-to-one correspondence with the points of the space $\R^3$ by the projection. Hence, the notion of the section captures the idea of a scalar or vector field in a geometric way.
Complex fields are also possible in classical physics, and they can be trivially interpreted as real fields since any $d$-dimensional complex vector space $\C^d$ has an underlying real $2d$-dimensional vector space $\R^{2d}$. Moreover, it turns out that some of these fields are more appropriately interpreted as special geometric objects defined on these bundles. Notably, the electromagnetic potential is understood as the components of a connection of the bundle $\R^{3+1}\times \U(1)\mapsto\R^{3+1}$, where $\R^{3+1}$ is the Minkowski spacetime, while the electromagnetic field is its curvature. This is the geometric formulation of gauge theory, and it generalizes to non-Abelian groups corresponding to the electroweak and strong interactions, resulting in the (classical) Yang--Mills theory \citep{YangMills1954ConservationOfIsotopicSpinAndIsotopicGaugeInvariance}.

We see that the additional properties appearing in classical physics, while not geometric objects inherent to the $3$-dimensional space, can be interpreted as objects of Types \ref{type:space_subset} and \ref{type:space_collection_subsets} of extended spaces---the fiber bundles.
While fiber bundles extend the geometry of space or spacetime, the projection ``$\mapsto$'' makes their sections, connections, and curvatures genuine geometric objects on the base space, that is, on the Euclidean space $\R^3$, on the Minkowski spacetime $\R^{3+1}$, or on the four-dimensional curved spacetime of general relativity.
\end{remark}

The discussion from Remark \ref{rem:type:space_additional_properties} justifies the following geometric version (and generalization) of Type \ref{type:space_additional_properties} objects.

\begin{frem}
\setobjectTypeCustomtag{\ref*{type:space_additional_properties}$^\prime$}
\begin{objectTypeCustom}[Geometrization of additional properties]
\label{type:space_additional_properties_geometrization}
Geometric objects of Types \ref{type:space_subset} and \ref{type:space_collection_subsets} of a fiber bundle over a manifold $M$ (which can be the space or spacetime) are objects on the base space $M$.
\end{objectTypeCustom}
\end{frem}

\begin{example}[of Type \ref{type:space_additional_properties_geometrization}]
\label{ex:type:space_additional_properties_geometrization}
All examples from Remark \ref{rem:type:space_additional_properties}.
Note that objects of Type \ref{type:space_additional_properties_geometrization} do not have to be sections of the bundle. For example, for a charged classical point-particle, the charge is localized at a point; therefore, if we interpret it as a section of the bundle $\R^3\times \R\mapsto\R^3$, it is a discontinuous section.
In general, geometric objects of Types \ref{type:space_subset} and \ref{type:space_collection_subsets} of a fiber bundle over space are not sections, although they can be interpreted as discontinuous sections or collections of such sections.
\end{example}

\begin{remark}
\label{rem:type:reduction}
Any Type \ref{type:space_subset} object can be understood as a Type \ref{type:space_collection_subsets} object, where the collection of Type \ref{type:space_subset} objects has only one element.
Any Type \ref{type:space_collection_subsets} object can be understood as a Type \ref{type:space_additional_properties} object, where the additional property is the same for all elements of the collection defining the Type \ref{type:space_collection_subsets} object. Any Type \ref{type:space_collection_subsets} object can also be understood as a Type \ref{type:space_additional_properties_geometrization} object, where the fiber bundle is $\R^3\times L\mapsto\R^3$, where $L$ has one element per set of points composing the Type \ref{type:space_collection_subsets} object.
\end{remark}

Throughout this article, I refer to Type \ref{type:space_subset} and \ref{type:space_collection_subsets} objects as objects in space, and to Type \ref{type:space_additional_properties} and \ref{type:space_additional_properties_geometrization} objects as objects on space, to distinguish that the former consist of subsets of space and the latter consist of Type \ref{type:space_subset} and \ref{type:space_collection_subsets} objects with additional nongeometric properties, or geometric objects in a fiber bundle over space.

\subsection{Wavefunction on Space}
\label{s:wf_space}

We are ready to prove that the wavefunction already is an object on space.

\begin{frem}
\begin{theorem}
\label{thm:wf_space}
The wavefunction in NRQM is an object on space of the classical types described in Section \ref{s:types}.
\end{theorem}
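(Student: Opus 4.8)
The plan is to prove the statement not by a new computation but by reading the configuration space $\R^{3\n}$ for what it is: not a list of abstract coordinates, but the set of configurations of $\n$ labeled points in the physical space $\R^3$. Once that reading is in place, the wavefunction falls into the classification of Section \sref{s:types} step by step, and the theorem reduces to checking that each step stays inside Types \ref{type:space_subset}--\ref{type:space_additional_properties_geometrization}.

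First I would note that a single point $X=(\x_1,\ldots,\x_{\n})\in\R^{3\n}$ is nothing but the ordered collection of the $\n$ points $\x_1,\ldots,\x_{\n}\in\R^3$. As a collection of singletons of $\R^3$ carrying the labels $1,\ldots,\n$ that distinguish otherwise congruent points, $X$ is a composite object of Type \ref{type:space_collection_subsets} equipped with labels, hence a Type \ref{type:space_additional_properties} object on $\R^3$ --- the same move already made in Example \ref{ex:type:space_additional_properties}, where labeled points and lines remain legitimate objects of Euclidean geometry. Next, the wavefunction assigns to each configuration $X$ a complex amplitude $\psi(X)=\psi(\x_1,\ldots,\x_{\n})\in\C$. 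Attaching an element of the set $S=\C$ to a geometric object is exactly the operation defining Type \ref{type:space_additional_properties}, so each amplitude-bearing configuration $(X,\psi(X))$ is again a Type \ref{type:space_additional_properties} object on $\R^3$, no more exotic than a point of space carrying a mass or a charge as in Example \ref{ex:type:space_additional_properties}.

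The wavefunction as a whole is then the family of all these amplitude-bearing configurations indexed by $X\in\R^{3\n}$, i.e. a collection of Type \ref{type:space_additional_properties} objects on $\R^3$; this is a composite object of Type \ref{type:space_collection_subsets} in the generalized sense explicitly permitted for the components by Type \ref{type:space_additional_properties}. By Remark \ref{rem:type:reduction}, objects of Types \ref{type:space_subset}--\ref{type:space_additional_properties} are mutually interpretable, so the family is itself a single object on space. In parallel I would give the fiber-bundle reading of Type \ref{type:space_additional_properties_geometrization}: $\psi$ is a section of the trivial complex line bundle $\R^{3\n}\times\C\mapsto\R^{3\n}$, hence by Type \ref{type:space_additional_properties_geometrization} an object on the base $\R^{3\n}$, which in turn is an object on $\R^3$ by the first step. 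This places $\psi$ among exactly the sections-of-bundles that Remark \ref{rem:type:space_additional_properties} already counts as genuine objects on the base manifold.

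The step I expect to be the main obstacle is not any calculation but the legitimacy of passing to the continuous, uncountable family: the collection of amplitude-bearing configurations is indexed by all of $\R^{3\n}$ at once, so I must argue that Type \ref{type:space_collection_subsets} genuinely tolerates infinite collections. The paper's own examples license this, since classical scalar, vector, and tensor fields are continuous families of space points carrying field values and are treated as sections in Remark \ref{rem:type:space_additional_properties}; the wavefunction is in precisely the same position. I would close by stressing that the extra structure carried by $\psi$, namely its normalization and its evolution under the {\schrod} equation, is additional structure layered on top of an object on space rather than a feature that removes it from the classical Types, so no qualitatively new kind of object appears.
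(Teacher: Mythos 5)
Your main line of argument coincides with the paper's: a configuration $(\x_1,\ldots,\x_{\n})$ is a Type \ref{type:space_subset} object, the collection of all such configurations is a Type \ref{type:space_collection_subsets} object, and decorating each configuration with its amplitude $\psi(\x_1,\ldots,\x_{\n})\in\C$ makes the wavefunction a Type \ref{type:space_additional_properties} object --- which, per the terminology fixed at the end of Sec.~\sref{s:types}, already establishes the theorem. Where you diverge is the fiber-bundle reading, and there your version does not quite land where the paper's does. You take $\psi$ to be a section of $\R^{3\n}\times\C\mapsto\R^{3\n}$, which by Type \ref{type:space_additional_properties_geometrization} makes it an object on the base $\R^{3\n}$, and then appeal to ``$\R^{3\n}$ is in turn an object on $\R^3$'' to descend further. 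But Type \ref{type:space_additional_properties_geometrization} as stated requires the base manifold to be space or spacetime itself; chaining ``object on an object on space'' is an extra transitivity principle that the classification in Sec.~\sref{s:types} does not grant you. The paper avoids this by building the bundle directly over $\R^3$ with fiber $\C$: the pair $\bigl((\x_1,\ldots,\x_{\n}),c\bigr)$ is reinterpreted as the $\n$-point configuration $\bigl((\x_1,c),\ldots,(\x_{\n},c)\bigr)$ in the total space $\R^3\times\C$, a Type \ref{type:space_subset} object there, and the collection over all configurations is then a Type \ref{type:space_collection_subsets} object in $\R^3\times\C$, hence Type \ref{type:space_additional_properties_geometrization} on $\R^3$ with no intermediate base. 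Since your Type \ref{type:space_additional_properties} argument already proves the theorem, this is a flaw only in the supplementary geometrization step, not in the result; but if you want the Type \ref{type:space_additional_properties_geometrization} conclusion you should either adopt the paper's construction or explicitly justify the descent from $\R^{3\n}$ to $\R^3$ as an additional lemma.
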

\end{frem}
\begin{proof}
Any configuration of $\n$ points in space $(\x_1,\ldots,\x_{\n})$ is an object of Type \ref{type:space_subset}. The collection of all such $\n$-tuples is an object of Type \ref{type:space_collection_subsets}. However, to specify the wavefunction $\psi$ from Equation \eqref{eq:wavefunction}, each of these $\n$-tuples has to be associated with a complex number $\psi(\x_1,\ldots,\x_{\n})\in\C$. Therefore, the wavefunction $\psi$ is an object of Type \ref{type:space_additional_properties}.
More precisely, the wavefunction is an object of Type \ref{type:space_collection_subsets} in the fiber bundle $\R^3\times\C\mapsto\R^3$. To see this, we can interpret geometrically the pair $\((\x_1,\ldots,\x_{\n}),c\)$, where $c=\psi(\x_1,\ldots,\x_{\n})\in\C$, as a configuration of $\n$ points in $\R^3\times\C$,  $\((\x_1,c),\ldots,(\x_{\n},c)\)\subset\R^3\times\C$. This is a Type \ref{type:space_subset} object in $\R^3\times\C$.
The collection of such objects corresponding to all possible choices of $\n$ points in $\R^3$ is therefore an object of Type \ref{type:space_collection_subsets} in $\R^3\times\C$.
This makes it an object of Type \ref{type:space_additional_properties_geometrization} on the $3$-dimensional space $\R^3$.

Another way to see this is the following: we can interpret a pair $\((\x_1,\ldots,\x_{\n}),c\)$ as a function $f_{\{\x_1,\ldots,\x_{\n}\}}:\R^3\to \{0,1\}\times\C$,
\begin{equation}
\label{eq:type2-section}
f_{\{\x_1,\ldots,\x_{\n}\}}(\x):=\(\chi_{\{\x_1,\ldots,\x_{\n}\}},c\),
\end{equation}
where $\chi_M$ is the characteristic function of the set $M$, that is, $\chi(\x)=1$ if $\x\in M$ and $\chi(\x)=0$ otherwise. Therefore, $f_{\{\x_1,\ldots,\x_{\n}\}}$ is a section of a bundle $\R^3\times\(\{0,1\}\times\C\)\mapsto\R^3$.
The collection of such sections, corresponding to all possible choices of points in $\R^3$, is therefore a section $\mc{F}$ in a bundle with a very large fiber over $\R^3$, obtained as the bundle direct sum $\bigoplus_{\n\in\N}\bigoplus_{\(\x_1,\ldots,\x_{\n}\)\in\(\R^3\)^\n} \R^3\times\(\{0,1\}\times\C\)\mapsto\R^3$, namely, $\mc{F}$ has as components all functions $f_{\{\x_1,\ldots,\x_{\n}\}}$ for all $\(\x_1,\ldots,\x_{\n}\)\in\(\R^3\)^k$ for all $\n\in\N$.
Therefore, the wavefunction can be interpreted as well as a section with infinitely many components over the $3$-dimensional space $\R^3$. 
\end{proof}

\begin{remark}
\label{rem:f-are-rigid}
Note that the restriction of the function $f_{\{\x_1,\ldots,\x_{\n}\}(\x)}$ from Equation \eqref{eq:type2-section} to a subset of points $\{\x'_1,\ldots,\x'_{k}\}\subsetneq\{\x_1,\ldots,\x_{\n}\}$ is not the function $f_{\{\x'_1,\ldots,\x'_{k}\}}$ corresponding to the subset $\{\x'_1,\ldots,\x'_{k}\}$. These functions are rigid objects, just like rigid bodies in classical physics, or like Type \ref{type:space_collection_subsets} objects in Euclidean geometry.
One may object that the component $c$ of $f_{\{\x_1,\ldots,\x_{\n}\}}$, which is $\psi\(\x_1,\ldots,\x_{\n}\)$, is not localized anywhere in space since it depends on the configuration of points $\{\x_1,\ldots,\x_{\n}\}$. But consider, for example, a classical plane wave. Can we say that its wavelength is localized in some point in space? Its wavelength depends on the values of the plane wave at all points in space, or at least an extended region of space, and yet nobody would doubt the fact that the plane wave is a classical object.

However, it is possible to have a representation of the wavefunction on space as a section in a very large fiber bundle so that the section restricts in this way, to obtain local separability. This can be achieved by using a very large local gauge group on that very large bundle, as shown in \citep{Stoica2019RepresentationOfWavefunctionOn3D}.
\end{remark}

\begin{remark}
\label{rem:no-qualitative-difference}
Theorem \ref{thm:wf_space} shows that the relation between the wavefunction and the $3$-dimensional space is not qualitatively different compared to the relation between objects in Euclidean geometry or classical physics and space. Quantitatively speaking, the wavefunction is in some sense an ``extreme'' case of Type \ref{type:space_additional_properties} because, as a collection of configurations of points in space, it includes all possible tuples of points.
\end{remark}

\begin{remark}
\label{rem:multifield}
Theorem \ref{thm:wf_space} can be interpreted as providing additional grounding on space for the multifield interpretation of the wavefunction \citep{Forrest1988QuantumMetaphysics,Belot2012QuantumStatesPrimitiveOntologists}.
However, there is no need to adopt multifields once we realize that the wavefunction as it is already is an object on space.
\end{remark}

\subsection{Quantum Fields on Space}
\label{s:qf_space}

What about quantum fields? After all, the correct quantum theory is not NRQM but quantum field theory (QFT).
A simple way to show that quantum fields can be as well understood as objects on space follows from Theorem \ref{thm:wf_space}, as we will see now. But it is also possible to show it directly for fields in the wavefunctional representation, as we will see in Theorem \ref{thm:qf_space}.

\begin{frem}
\begin{corollary}
\label{thm:wf_space_qft}
The result of Theorem \ref{thm:wf_space} extends to quantum field theory as well.
\end{corollary}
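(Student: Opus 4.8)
The plan is to leverage Theorem \ref{thm:wf_space} by reducing the quantum-field case to the already-established wavefunction case. The key observation is that quantum field theory in its various formulations ultimately presents states in a way that mirrors the structure of NRQM wavefunctions: in the \emph{wavefunctional representation}, a state of a quantum field $\phi$ is a functional $\Psi[\phi]$ assigning a complex amplitude to each classical field configuration $\phi:\R^3\to\R$ (or $\R^3\to\R^d$). First I would note that a single field configuration $\phi$ is itself a section of a fiber bundle over $\R^3$, hence an object \emph{on} space of Type \ref{type:space_additional_properties_geometrization} by the discussion in Remark \ref{rem:type:space_additional_properties}. The collection of all such configurations is then an object of Type \ref{type:space_collection_subsets} in the appropriate bundle, and attaching the complex amplitude $\Psi[\phi]\in\C$ to each configuration makes $\Psi$ an object of Type \ref{type:space_additional_properties}, in exact analogy with the argument that promoted $\psi$ from a Type \ref{type:space_subset} collection to a Type \ref{type:space_additional_properties} object.

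The structural parallel is the crux: where Theorem \ref{thm:wf_space} attached a complex number to each $\n$-tuple of points $(\x_1,\ldots,\x_{\n})$, here one attaches a complex number to each field configuration. The second half of that theorem's proof then transfers almost verbatim. I would interpret the pair $(\phi, c)$, with $c=\Psi[\phi]$, geometrically by adjoining the common value $c$ to the configuration, realizing it as a section of the enlarged bundle whose fiber carries the extra $\C$ factor; the collection over all configurations $\phi$ is an object of Type \ref{type:space_collection_subsets} in this enlarged bundle, hence an object of Type \ref{type:space_additional_properties_geometrization} on $\R^3$. Thus the same four-type classification that applied to $\psi$ applies to $\Psi$, and no qualitatively new feature appears.

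For completeness I would also indicate the route through Fock space, since that is the representation most readers associate with QFT. A Fock-space state decomposes into $n$-particle sectors, and each $n$-particle wavefunction is precisely an object of the form treated in Theorem \ref{thm:wf_space}; a general state is a superposition over all sectors, so it is a collection of Type \ref{type:space_additional_properties} objects, which is again such an object by Remark \ref{rem:type:reduction}. The main obstacle I anticipate is not mathematical but conceptual and definitional: making rigorous that the \emph{space of field configurations} is genuinely a collection of objects on $\R^3$ rather than a new abstract space in its own right. This is the same move that makes the configuration space $\R^{3\n}$ legitimate in Theorem \ref{thm:wf_space}, namely recognizing each configuration as a subset-with-labels \emph{in} space rather than a point in an autonomous higher-dimensional arena. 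I expect the cleanest presentation is to state this reduction explicitly and defer the direct, self-contained treatment to Theorem \ref{thm:qf_space}, which the corollary's proof can simply point to as an alternative.
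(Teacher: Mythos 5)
Your proposal is correct, but it inverts the paper's division of labor. The paper's own proof of this corollary is exactly your ``for completeness'' paragraph: it works in the Fock representation, writes an arbitrary state as a superposition over all particle numbers $\n$ and particle types $D\in\mc{D}^\n$ of $\n$-particle wavefunctions $\psi(\x_1,\ldots,\x_\n,D)$, observes that each such wavefunction is already an object on space by Theorem \ref{thm:wf_space}, and then packages the whole superposition as a Type \ref{type:space_collection_subsets} object in the single bundle $\R^3\times\C\times\mc{D}$, hence a Type \ref{type:space_additional_properties_geometrization} object on $\R^3$. Your primary route through the wavefunctional representation is not the corollary's proof at all --- it is, almost verbatim, the paper's separate Theorem \ref{thm:qf_space}, which the author deliberately keeps as an independent, non-corollary argument precisely because it does not reduce to Theorem \ref{thm:wf_space} but instead re-runs the Type \ref{type:space_subset} $\to$ \ref{type:space_collection_subsets} $\to$ \ref{type:space_additional_properties_geometrization} ladder with sections of $\R^3\times\mathbb{V}\mapsto\R^3$ in place of $\n$-tuples of points. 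So you have in effect reconstructed both results; your closing remark that the cleanest presentation defers the direct treatment to Theorem \ref{thm:qf_space} is exactly the paper's structure. Two small points to tighten in the Fock half: (i) the superposition runs over particle species and internal degrees of freedom as well as positions, so the ambient bundle must carry the label set $\mc{D}$ (the paper uses $\R^3\times\C\times\mc{D}$), not just the $\C$ factor; (ii) Remark \ref{rem:type:reduction} says that lower types embed into higher ones, not that a collection of Type \ref{type:space_additional_properties} objects is again one --- the correct closing move is the paper's, namely that the collection of all labeled configurations, across all $\n$ and $D$, is a single Type \ref{type:space_collection_subsets} object in the enlarged bundle's total space.
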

\end{frem}
\begin{proof}
We will use the Fock representation of quantum states. Quantum fields are usually represented as operator-valued distributions, expressed in terms of creation and annihilation operators $\hat{a}^\dagger_D$ and $\hat{a}_D$. These, in turn, are subject to commutation (in the case of bosons) or anticommutation relations (in the case of fermions). The subscript index $D$ represents the particle type, and it may include spin degrees of freedom and internal degrees of freedom (like color and hypercharge). Let $\mc{D}$ be the set of all types of particles. We do not need to detail all possible types of Standard Model particles, as they all can be expressed by creation and annihilation operators for various types of particles from $\mc{D}$.

By repeatedly applying various operators $\hat{a}^\dagger_D$ and $\hat{a}_D$ and linear combinations to the vacuum state $\ket{0}$, any Fock state can be constructed. Since we are interested in the position basis, we apply creation operators of the position eigenstates,
\begin{equation}
\label{eq:Fock}
\ket{\x_1,\ldots,\x_\n}_D := \hat{a}^\dagger_{D_1}(\x_1)\ldots\hat{a}^\dagger_{D_\n}(\x_\n)\ket{0},
\end{equation}
where $D=\(D_1,\ldots,D_\n\)$, obtaining the Fock representation in the position basis.
Then, any quantum state $\ket{\psi}$ can be expressed as a superposition of $\n$-particle wavefunctions for all possible positive integer values of $\n$,
\begin{equation}
\label{eq:psi_Fock}
\ket{\psi} = \sum_{\n\in\N}\sum_{D\in\mc{D}^\n}\int_{(\x_1,\ldots,\x_n)\in\R^{3\n}}
\psi(\x_1,\ldots,\x_\n,D)\ket{\x_1,\ldots,\x_\n}_D\de \x_1\ldots\de\x_\n.
\end{equation}

Hence, quantum states from quantum field theory are just superpositions of $\n$-particle states, already known from Theorem \ref{thm:wf_space} to be objects on space. In the general case, from Equation \eqref{eq:psi_Fock}, they are collections of configurations of $\n$ points in the fiber bundle $\R^3\times \C\times\mc{D}$. Therefore, the infinite collection of wavefunctions from Equation \eqref{eq:psi_Fock} forms an object of Type \ref{type:space_collection_subsets} in $\R^3\times\C\times\mc{D}$, and therefore an object of Type \ref{type:space_additional_properties_geometrization} on the $3$-dimensional space $\R^3$.
\end{proof}

In the standard quantization scheme used in NRQM, one starts from classical point-particle configurations and obtains wavefunctions defined on that configuration space.
The basis of the resulting Hilbert space corresponds to points in this configuration space so they can be labeled by the positions $\ket{\x_1,\ldots,\x_\n}$.
To obtain field quantization, the resulting wavefunctions are then again quantized (the second quantization).

However, we can apply the same quantization idea directly to classical fields \citep{Hatfield2018QuantumFieldTheoryOfPointParticlesAndStrings}, leading to another way to understand quantum fields as objects on space, at least for bosonic fields.
\begin{frem}
\begin{theorem}
\label{thm:qf_space}
Quantum field configurations in the wavefunctional representation are, in the case of bosonic fields, objects on space.
\end{theorem}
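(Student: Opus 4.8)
The plan is to mirror the argument of Theorem~\ref{thm:wf_space} and Corollary~\ref{thm:wf_space_qft}, but starting from the classical \emph{field} configuration space rather than from the particle configuration space $\R^{3\n}$. In the wavefunctional representation, a quantum state is a functional $\Psi[\phi]$ assigning a complex amplitude to each classical field configuration $\phi$, where $\phi$ ranges over some space $\mc{C}$ of classical fields on $\R^3$ (for a scalar field, $\phi:\R^3\to\R$, so $\mc{C}$ is a space of functions on space). First I would recall that each classical field configuration $\phi\in\mc{C}$ is itself already an object \emph{on} space: by Remark~\ref{rem:type:space_additional_properties}, a scalar field is a section of the bundle $\R^3\times\R\mapsto\R^3$, hence an object of Type~\ref{type:space_additional_properties_geometrization}, and more generally a vector or tensor field is a section of the appropriate bundle over $\R^3$.

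The key step is then to assign to each such field configuration its complex amplitude $\Psi[\phi]\in\C$, in exact analogy with associating $\psi(\x_1,\ldots,\x_\n)\in\C$ to each point-tuple in the proof of Theorem~\ref{thm:wf_space}. Concretely, I would interpret the pair $\(\phi,c\)$, where $c=\Psi[\phi]$, geometrically by enlarging the target of the field: a configuration $\phi:\R^3\to\R$ together with the amplitude $c$ can be viewed as the section $\x\mapsto\(\phi(\x),c\)$ of the bundle $\R^3\times(\R\times\C)\mapsto\R^3$, which is a Type~\ref{type:space_additional_properties_geometrization} object on $\R^3$. The collection of all such sections, taken over every classical configuration $\phi\in\mc{C}$, is then an object of Type~\ref{type:space_collection_subsets} in this enlarged bundle, and therefore an object of Type~\ref{type:space_additional_properties_geometrization} on the $3$-dimensional space $\R^3$. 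For fields carrying internal or spin indices one simply enlarges the fiber accordingly, exactly as the index $D$ enlarged the fiber to $\R^3\times\C\times\mc{D}$ in Corollary~\ref{thm:wf_space_qft}.

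The main obstacle I anticipate is not conceptual but a matter of being careful about the infinite-dimensionality of the field configuration space $\mc{C}$ and about what counts as a legitimate ``collection'' in Type~\ref{type:space_collection_subsets}. Whereas Theorem~\ref{thm:wf_space} indexes its collection by tuples of points in $\R^3$, here the collection is indexed by the much larger set $\mc{C}$ of classical field configurations, each of which is itself a section over $\R^3$. The argument does not actually require a measure or inner-product structure on $\mc{C}$ to establish the \emph{geometric} claim, so I would emphasize that the typing is purely set-theoretic and geometric: each configuration is a section, each gets a complex label, and the whole is a (possibly uncountable) collection of such labeled sections. I expect the proof to be short, stressing that no qualitatively new feature arises beyond those already present in Theorem~\ref{thm:wf_space}, with the only novelty being that the underlying Type~\ref{type:space_collection_subsets} objects are now whole field configurations on $\R^3$ rather than finite point-tuples.
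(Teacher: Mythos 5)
Your proposal is correct and follows essentially the same route as the paper: the paper likewise treats each classical field configuration as a section of a bundle $\R^3\times\mathbb{V}\mapsto\R^3$, attaches the amplitude $\Psi(\varphi)\in\C$ by enlarging the fiber to $\mathbb{V}\times\C$, and then takes the collection over all configurations as a Type \ref{type:space_collection_subsets} object in the enlarged bundle, hence a Type \ref{type:space_additional_properties_geometrization} object on $\R^3$. Your added remarks on the infinite-dimensionality of the configuration space and the irrelevance of the measure are consistent with the paper, which likewise notes that the canonical (anti)commutation relations do not affect the argument.
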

\end{frem}
\begin{proof}
A classical field $\varphi$ configuration is defined on the three-dimensional space, and valued either in $\C$ or in some vector space $\mathbb{V}$, whose dimension depends on the spin and internal properties of the field.
The values of the field at different points are required to satisfy canonical commutation or anticommutation relations, according to the spin, but this does not impact this proof.
Let $\mathcal{C}(\R^3,\mathbb{V})$ be the classical field configuration space of classical bosonic fields defined on space and valued in $\mathbb{V}$.
In the wavefunctional quantization \citep{Hatfield2018QuantumFieldTheoryOfPointParticlesAndStrings}, the {\schrod} wavefunctional, or its relativistic versions, is defined on $\mathcal{C}(\R^3,\mathbb{V})$.
The {\schrod} wavefunctional $\Psi:\mathcal{C}(\R^3,\mathbb{V})\to\C$ depends on the classical field configuration $\varphi$, and associates a complex number $\Psi(\varphi)$ to each $\varphi$. The basis of the Hilbert space of wavefunctionals is in one-to-one correspondence with the classical field configurations, so we will label them $\ket{\varphi}$. The state vector has the form
\begin{equation}
\label{eq:wavefunctional}
\ket{\Psi}=\int_{\mathcal{C}(\R^3,\mathbb{V})}\Psi(\varphi)\ket{\varphi}\mathcal{D}\varphi,
\end{equation}
where $\mathcal{D}\varphi$ is the measure on $\mathcal{C}(\R^3,\mathbb{V})$.

The classical field $\varphi$ is a section in a bundle $\R^3\times\mathbb{V}\mapsto\R^3$. The wavefunctional \eqref{eq:wavefunctional} associates to each section of the bundle a complex number. The sections of $\R^3\times\mathbb{V}\mapsto\R^3$ are Type \ref{type:space_subset} objects in the bundle's total space $\R^3\times\mathbb{V}$.
A section with a label from $\C$ is a Type \ref{type:space_subset} object in $\R^3\times\mathbb{V}\times\C$, which is the total space of a bundle $\R^3\times\mathbb{V}\times\C\mapsto\R^3$. The wavefunctional \eqref{eq:wavefunctional} is thus an object of Type \ref{type:space_collection_subsets} in $\R^3\times\mathbb{V}\times\C\mapsto\R^3$, and, by Remark \ref{rem:type:space_additional_properties}, an object of Type \ref{type:space_additional_properties_geometrization} on the $3$-dimensional space $\R^3$.
\end{proof}

\begin{remark}
\label{rem:thm:qf_space:fermions}
For fermionic fields, there are important differences, preventing a straightforward interpretation like the one from Theorem \ref{thm:qf_space} for the case of the bosons \citep{Jackiw1988AnalysisInfDimManifoldsSchrodingerRepresentationForQuantizedFields}.
Technically, this result applies to fermionic fields as well, but it is difficult to interpret the wavefunctional in terms of fields localized in space. The reason is that fermionic field creation and annihilation operators localized at spacelike separated positions anticommute. The corresponding bosonic operators commute, which makes the localization of the field operators possible, but the anticommutativity of the fermionic creation and annihilation operators requires that the components of the classical fermionic fields are ``anticommuting numbers'', called Grassmann numbers \citep{Jackiw1988AnalysisInfDimManifoldsSchrodingerRepresentationForQuantizedFields,Hatfield2018QuantumFieldTheoryOfPointParticlesAndStrings}. Because the field operators for spacelike separated points anticommute, the fermionic fields cannot be localized {\citep{FloreaniniJackiw1988FunctionalRepresentationForFermionicQuantumFields}}:

\begin{fquote}
...a particle state is localized in the bosonic functional space, while there is no concept of localization in the Grassmann space.
\end{fquote}

However, since the observables are quadratic in the spinor fields, they can be local, and local measurements at spacelike separated points are compatible. This nonlocalization affects only the fermionic field operators and the creation and annihilation operators and not the observables, so it does not manifest in experiments, and everything is found to be localized. However, this nonlocalizability seems to prevent a straightforward method to assign an ontology on space or spacetime to the fermionic fields.

Despite this problem, the representation given in {\citep{Stoica2019RepresentationOfWavefunctionOn3D}}, while not given in terms of wavefunctionals, works for both bosons and fermions because it does not make use of anticommuting numbers. It is based on the Fock representation in terms of tensor products of wavefunctions on space
. Since the wavefunctional and the Fock formulations are equivalent {\citep{Hatfield2018QuantumFieldTheoryOfPointParticlesAndStrings}}, this suggests that there must be a way to make sense of fermionic fields as objects on space for the wavefunctional formulation as well, but how exactly this can be achieved remains an open question.
\end{remark}

\begin{remark}
\label{rem:born-rule}
Theorem {\ref{thm:qf_space}} can be applied to give an interpretation of the Born rule in terms of classical probability distributions of coexisting classical worlds.
The first impression of someone freshly acquainted with the probabilities in QM is that they apply to the state of the observed system. But there is another way to look at this problem: by considering the state of the entire universe. This picture is not something that was introduced only after the discovery of quantum mechanics, since it was used in classical statistical mechanics too.
In classical statistical mechanics, macrostates, which are equivalence classes of macroscopically indistinguishable (micro)states, are regions of the phase space. This observation can be used to provide an interpretation of QM ``without observers'', in which the wavefunction collapse is triggered by the propagation of the Wigner function (representing the state in the phase space formulation of QM) across a region of the phase space corresponding to more macrostates, so that the wavefunction collapses to a single macrostate, with a probability postulated to be the one given by the Born rule {\citep{Stoica2020StandardQuantumMechanicsWithoutObservers}}.
However, here, I want to talk about probabilities in the MWI.
In the Hilbert space formulations of QM, the macrostates correspond to subspaces forming a direct sum decomposition of the Hilbert space. These macrostates determine the branches of the wavefunction in the MWI. But since they represent macrostates, which appear to us as classical, understanding the relation between the wavefunction and the $3$-dimensional space is important. The existence of local observables that correspond to macroscopically observable properties localized in the $3$-dimensional space becomes relevant, since the macrostates can then be understood as subspaces of the Hilbert space consisting of states with the same macroscopic classical properties. These macroscopic observables play a special role as local beables, but they are insufficient to identify a unique microstate. A completion of the commuting set of macroscopic classical observables is necessary. This completion can be achieved by using the wavefunctional formulation of QFT, and the representation from Theorem {\ref{thm:qf_space}} in terms of classical field configurations. What is interesting is that, by doing this, we achieve not only an understanding of the quantum wavefunctional in terms of classical fields on space, but also an interpretation of the probabilities in QM. If the coefficients in the linear combination from Equation {\eqref{eq:wavefunctional}} are real numbers, the wavefunctional is equivalent to a measure on the configuration space of a classical field configuration. In MWI, this can be used to give a justification for the Born rule, by reinterpreting the wavefunctional (as in Theorem {\ref{thm:qf_space}}) as a probability distribution of coexisting classical worlds. The probability is obtained as the self-location probability of the observer in one of the classical worlds. Since the probability distribution is spread differently across each macrostate, this gives a straightforward interpretation of the Born rule very similar to the probabilities in classical statistical mechanics. The main difference is that the probabilities do not come from the observer's lack of knowledge about which is the actual state of the classical world, but from the lack of knowledge about in which of the possible classical worlds she is located. In this approach, there is no independent evolution of the classical worlds in which they preserve their identity because, in fact, the dynamics are given by the {\schrod} equation for the wavefunctional, which propagates over the configuration space of classical field configurations. This cannot be interpreted as independent evolutions of the individual classical worlds, but only as a collective evolution of the distribution of classical worlds. Therefore, the classical worlds can be used only to yield probabilities, as they do in classical statistical mechanics. But what about the fact that the wavefunctional is not real but complex? Since the classical field configurations include electromagnetic gauge fields with the symmetry group $\U(1)$, there is a correspondence between the choices of the global $\U(1)$ gauge and the complex phase multiplying the basis vectors corresponding to the classical field configuration briefly described in Theorem {\ref{thm:qf_space}} so that the complex wavefunctional can be interpreted as a probability distribution of coexisting classical gauged field configurations. The global $\U(1)$ gauge, just like the global phase, is not observable, but we are dealing with a linear combination of such states, as in Equation {\eqref{eq:wavefunctional}}, and then these gauges/phases associated with the classical states matter for the collective evolution of the classical worlds composing the wavefunctional. Due to the fact that the decomposition of the Hilbert space as a direct sum of subspaces representing macrostates is the same at all times, regardless of what experiments are performed, this reconstruction provides an interpretation of the quantum probabilities that is as close as it can be to the classical probabilities. In addition, it is equivalent to QFT. Admittedly, this summary contains too much information packaged in a small part of an article that does not deal with probabilities, but all of these are described in more detail in {\citep{Stoica2023TheRelationWavefunction3DSpaceMWILocalBeablesProbabilities}} and the more pedagogical introduction {\citep{Stoica2024ClassicalManyWorldsInterpretation}}.
\end{remark}

\section{Symmetry Group Explanation}
\label{s:explanation_symmetry}

In the previous section, we saw that the wavefunction does not have qualitatively new features that were not already encountered in Euclidean geometry or in classical physics. The difference is at best quantitative, including all configurations of the points (see Remark \ref{rem:no-qualitative-difference}).

By appealing to symmetry groups, this section provides a deeper understanding of what was presented in Section \ref{s:explanation_Euclidean}. The geometric nature of the wavefunction is better understood in terms of spacetime symmetries, and, as shown by Wigner and Bargmann, this approach leads straightforwardly to the classification of particles based on spin and mass. Additional properties of particles, like electric charge and color, come from the local gauge symmetries, which also account for interactions. I will explain how Wigner's approach is a natural application of Klein's Erlangen Program for geometry. As such, spacetime and gauge symmetries are not simply properties of quantum theory, but they determine its very structure, including the properties of the wavefunction.

\subsection{In the Light of Klein Geometry}
\label{s:klein}

A homogeneous space is like a regular polyhedron; whatever two vertices of the polyhedron one may choose, the polyhedron can be rotated to bring the first vertex to the same place where the second vertex was so that, after this rotation, no observable difference exists. In the continuum limit, the polyhedron becomes a sphere, but there are other homogeneous spaces, like the Euclidean plane and space, or, in fact, Euclidean spaces of any dimension, and hyperbolic spaces. By dropping the measure of angles in a Euclidean space, and maintaining only the notion of parallelism, one obtains an affine space. In a similar way, we can obtain projective spaces and other spaces. All of these types of geometry have something in common; they, in some sense, ``look the same'' at each of their points. It turns out that they all can be described completely using symmetry groups.

In his 1872 Erlangen Program paper, Klein explained what various homogeneous geometries have in common \citep{klein1872ErlangenProgram}:
\begin{frem}
\begin{enumerate}
\item All that we can know about a homogeneous space $S$ is contained in its symmetry group $G$ and the action of $G$ on $S$ (the way in which $G$ transforms $S$).
\item The homogeneous space $S$ can be obtained by ``dividing'' its symmetry group $G$ by a subgroup $H$ of $G$. The result, the coset space $G/H$, can be identified with the homogeneous space $S$.
\end{enumerate}
\end{frem}

Homogeneous geometries include Euclidean, non-Euclidean, affine, projective, Minkowski spacetime geometries, etc., but not Riemannian geometry, which, in general, is not homogeneous (for which we can use a generalization of Klein's idea named Cartan geometry \citep{Sharpe2000DifferentialGeometryCartansGeneralizationOfKleinsErlangenProgram}).
Klein's major insight is that, at the core of each of the homogeneous geometries, is a group $G$ (in general, a Lie group) acting transitively, and, in general, effectively, on a space $S$. 
Let me explain these terms. A left action $\alpha$ of the group $G$ on the space $S$ is a way to associate to each element $g$ of $G$ a one-to-one function $\alpha_g:S\to S$ so that the following apply:
\begin{enumerate}
	\item 
If $e\in G$ is the identity element, $\alpha_e(s)=s$ for all $s\in S$;
	\item 
For any $g,g'\in G$, the composition $\alpha_g\circ\alpha_{g'}$ of $\alpha_g$ and $\alpha_{g'}$ satisfies $\alpha_g\circ\alpha_{g'}(s)=\alpha_{gg'}(s)$.
\end{enumerate}

In general, we will denote the left action of a group $G$ on a space $S$ by $(G,S,\cdot)$, and, instead of $\alpha_g(s)$, we will use the notation $g\cdot s$.

For a left action of a group $G$ on a space $S$, the orbit of an element $s\in S$ is the set of all elements $s'$ that can be obtained by acting with some element of the group $G$ on $s$. The left action of a group $G$ on a closed subgroup $H$ of $G$ also has orbits, and, in this particular case, the orbits are called left cosets.

The interested reader can find out more about homogeneous spaces in {(\citealp{KobayashiNomizu1996FoundationsOfDifferentialGeometryVol2}, Chapter X).

An action of $G$ on $S$ able to transform any point of space into any other point is called transitive. This makes $S$ into a homogeneous space for the group $G$. An action is called effective if the only identity transformation of $S$ by elements of the group $G$ is due to the identity of $G$.
The group $G$ is a transformation group for the space $S$.
Its action on the space $S$ is a representation of $G$ on $S$.

The geometric objects consist of subsets of the space $S$, 
for example, the points, lines, angles, planes, and so on can be geometric objects. Therefore, in general, the geometric objects are objects of Types \ref{type:space_subset} or \ref{type:space_collection_subsets}.
In Klein's view, geometry studies the invariant properties of geometric objects, i.e., those properties that remain unchanged under the action of the symmetry group. 

Two objects related by a symmetry transformation are congruent, or, in general, isomorphic. The notion of isomorphism generalizes the notions of congruence and isometry, which, in Euclidean geometry, are established by translations and rotation, to other geometries, including those that do not have notions like distance or angle, i.e., without a scalar product, or which have different kinds of scalar products. For example, affine geometry does not have a scalar product. Minkowski geometry has a scalar product, but this has a different signature from that in Euclidean geometry.

The set of all geometric objects of a given kind forms itself a space on which the group acts.
In Euclidean geometry, for example, a triangle can be transformed into another one by translations, rotations, and reflections. 
Such a transformation exists only if the two triangles are congruent, so this time the group action is not transitive.
However, the space of triangles can be decomposed into orbits, where each orbit consists of the set of all triangles congruent to a particular triangle.
The group action is transitive on each orbit.
Each element of the group transforms a point of that space into another one, corresponding to a transformation of an object in that space into another one.

More technical definitions and results from group theory and Klein geometry, as well as the discussion of the geometry of Minkowski spacetime as a Klein geometry, can be found in Appendix {\ref{appendix:klein}}.

Let us analyze the objects of Types \ref{type:space_subset}--\ref{type:space_additional_properties} from Section \ref{s:types} from the point of view of Klein's approach to geometry.

\begin{remark}
\label{rem:klein:type:space_subset}
According to Klein, the transformation group also acts on manifolds whose points are configurations of points or geometric objects from the original space \citep{klein1872ErlangenProgram}:
\begin{fquote}
We may use instead of the point any configuration contained in the manifoldness, -- a group of points, a curve or surface, \etc. As there is nothing at all determined at the outset about the number of
arbitrary parameters upon which these configurations shall depend, the number of dimensions of our line, plane, space, \etc., may be anything we like, according to our choice of the element.
\end{fquote}

These geometric objects are Type \ref{type:space_subset} objects.
For example, the set $\mc{T}$ of all triangles congruent to a general fixed triangle $\triangle ABC$ in the plane form a $3$-dimensional manifold, even though the plane itself is $2$-dimensional. The reason is that we can parametrize these triangles by their center of mass and the rotation angle. Since these triangles are all congruent, the group of isometries of the Euclidean plane acts transitively on $\mc{T}$.
\end{remark}

\begin{remark}
\label{rem:klein:principal_group}
Klein points out that the essential structure defining a geometry is the group structure (emphasis in the original text) \citep{klein1872ErlangenProgram}:
\begin{fquote}
\emph{But as long as we base our geometrical investigation upon the same group of transformations, the substance of the geometry remains unchanged.} [...]
The essential thing is, then, the group of transformations; the number of dimensions to be assigned to a manifoldness appears of secondary importance.
\end{fquote}

But the action of the Euclidean group on the set $\mc{T}$ of triangles in the plane is determined by its action on the plane $\R^2$, which provides the standard or defining representation of the Euclidean group. Therefore, $\R^2$ retains its fundamental character despite the existence of other associated representations in higher dimensions. Klein gives as examples more types of manifolds associated with space, in particular those consisting of lines, planes, curves, or surfaces, following ideas introduced by Pl\"ucker and Grassmann.
\end{remark}

\begin{remark}
\label{rem:klein:type:space_collection_subsets}
Klein also considers composite objects. For example, he wrote the following \citep{klein1872ErlangenProgram}:
\begin{fquote}
We must evidently choose our space-elements in such a way that their manifoldness either is itself a body or can be decomposed into bodies.
\end{fquote}

Therefore, Type \ref{type:space_collection_subsets} objects are geometric objects in space in Klein's geometries too.
\end{remark}

\begin{remark}
\label{rem:klein:type:space_additional_properties_geometrization}
Other types of manifolds, on which the principal group acts, can be obtained as the product between the original space and another space. This is the case for fiber bundles.
But are fiber bundles over space or spacetime present in Klein geometry?
Indeed they are; the map that associates to each element of the group $G$ its coset $G\mapsto G/H$ is a fiber bundle. The left cosets of the form $g\cdot H$, where $g\in G$, are the fibers, $S=G/H$ is the base space, and $G$, being the union of all of the left cosets, is the total space of the fiber bundle.
More general geometries obtained from the objects of a Klein geometry are also fiber bundles. Recall the example of triangles in Euclidean geometry mentioned above. The space of triangles is the total space, the orbits consisting of congruent triangles form the fibers, and the ``free-floating'' triangles of given side lengths form the base space.
In general, since fiber bundles are themselves manifolds on which the isometry group of the base space acts, Type \ref{type:space_additional_properties_geometrization} objects qualify as objects on space as well, according to Klein's description of ``manifoldness''.
\end{remark}

Let us summarize these remarks.
\begin{frem}
\begin{observation}
\label{obs:types_are_klein_geometry}
Types \emph{\ref{type:space_subset}}, \emph{\ref{type:space_collection_subsets}}, and \emph{\ref{type:space_additional_properties_geometrization}} are geometric objects in Klein geometry, as described in \citep{klein1872ErlangenProgram}.
In particular, by Theorem \emph{\ref{thm:wf_space}}, the wavefunction is an object of Klein geometry.
\end{observation}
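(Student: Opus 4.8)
The plan is to read the Observation as a packaging corollary: it collects Remarks \ref{rem:klein:type:space_subset}, \ref{rem:klein:type:space_collection_subsets}, and \ref{rem:klein:type:space_additional_properties_geometrization} into a single statement and then feeds the outcome into Theorem \ref{thm:wf_space}. The first move is to fix the ambient Klein geometry once and for all: in the case at hand the group is the Euclidean isometry group $G=\R^3\rtimes\OO(3)$ and the homogeneous space is $\R^3=G/\OO(3)$, realized as a coset space exactly as in Definition \ref{def:klein_geometries} and Proposition \ref{thm:homogeneous_space} (the spacetime version is identical with the Poincar\'e group, per Example \ref{ex:homogeneous_space}). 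Every object discussed below is then built over this one fixed geometry, so that ``being an object of Klein geometry'' means being an element of some $G$-manifold derived functorially from $\R^3$.

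Next I would dispatch Types \ref{type:space_subset} and \ref{type:space_collection_subsets}, which are the easy cases because Klein licenses them in his own words. For Type \ref{type:space_subset} I would invoke the quotation in Remark \ref{rem:klein:type:space_subset}, which permits taking as a space-element ``any configuration contained in the manifoldness''; the set of all such configurations is a new $G$-manifold, and $G$ acts on it, in general decomposing it into orbits on which it is transitive, precisely as the Euclidean group acts on the $3$-dimensional manifold of triangles congruent to a fixed one. For Type \ref{type:space_collection_subsets} I would use the second Klein quotation (Remark \ref{rem:klein:type:space_collection_subsets}): a composite object is a tuple of Type \ref{type:space_subset} objects, and $G$ acts diagonally on the product of the corresponding configuration spaces. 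Neither step demands transitivity on the whole object space; as the triangle discussion already records, orbit-wise transitivity is all that Klein requires.

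The crux, and the only place I expect a genuine obstacle, is Type \ref{type:space_additional_properties_geometrization}. Here I must verify that a fiber bundle over the base is itself a manifold carrying a $G$-action compatible with the projection, so that its subsets and collections of subsets descend to objects \emph{on} the base. For principal bundles this is immediate from Definition \ref{def:klein_geometries}, which produces $G\to G/H$ whose total space is $G$ with $G$ acting by left translation, an action that is even transitive. For an associated or trivial bundle such as $\R^3\times\C\mapsto\R^3$ I would let $G$ act by $g\cdot(\x,c)=(g\cdot\x,c)$ (or through a chosen representation on the fiber), making the total space a $G$-space with $G$-equivariant projection; the arguments already given for Types \ref{type:space_subset}--\ref{type:space_collection_subsets}, now applied with the bundle's total space as the ambient manifold, then certify the bundle's subsets and collections as Klein-geometric, exactly as in Remark \ref{rem:klein:type:space_additional_properties_geometrization}. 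The delicate point to state carefully is this equivariance of the projection together with the fact that $G$ need not act transitively on the total space (its orbits are the copies $\R^3\times\{c\}$); but, just as for Type \ref{type:space_subset}, transitivity on each orbit suffices for Klein's framework.

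Finally I would close the loop. By Theorem \ref{thm:wf_space} the wavefunction is realized as an object of Type \ref{type:space_additional_properties_geometrization}, namely a collection of configurations of $\n$ points in the total space of $\R^3\times\C\mapsto\R^3$; by the Type \ref{type:space_additional_properties_geometrization} step it is therefore a geometric object in the Klein geometry based on $\R^3$. This delivers both assertions of the Observation at once: Types \ref{type:space_subset}, \ref{type:space_collection_subsets}, and \ref{type:space_additional_properties_geometrization} are objects of Klein geometry, and in particular so is the wavefunction.
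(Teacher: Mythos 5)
Your proposal is correct and follows essentially the same route as the paper, whose proof of this Observation is simply to cite Remarks \ref{rem:klein:type:space_subset}, \ref{rem:klein:principal_group}, \ref{rem:klein:type:space_collection_subsets}, and \ref{rem:klein:type:space_additional_properties_geometrization} and then invoke Theorem \ref{thm:wf_space}; you have merely unpacked those remarks into explicit $G$-actions (diagonal action on tuples, equivariant projection for the bundle case, orbit-wise transitivity). The only cosmetic difference is that you fold the content of Remark \ref{rem:klein:principal_group} (the standard representation on $\R^3$ fixing the ambient geometry) into your opening paragraph rather than citing it separately.
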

\end{frem}
\begin{proof}
Follows from Remarks \ref{rem:klein:type:space_subset}--\ref{rem:klein:type:space_additional_properties_geometrization}.
\end{proof}

These results show that, according to Klein geometry, the space on which the wavefunction is an object is our three-dimensional space. While the $3\mathbf{n}${-dimensional}
 configuration space is very useful in NRQM, and the wavefunction is defined and propagates in this space, the wavefunction is also an object of the geometry of the three-dimensional space, according to Klein's characterization of geometries.
Moreover, while geometric objects can be understood as belonging to more different geometries, we will see that the natural home for the wavefunction is the geometry of spacetime.
In the following, we will explore the evidence for this based on the fact that the theory of the representations of the symmetry group of the Minkowski spacetime gives the particles with the values of spin that we observed, as shown by Wigner and Bargmann.

\subsection{Wigner's Theorem}
\label{s:wigner}

In the following, we will see more precisely how the wavefunction is understood based on Wigner's classification, and how this fits into Klein's Erlangen Program. 

Physical systems can be subject to symmetry transformations of space or spacetime, of the internal spaces of the gauge groups, and of permutations of identical particles.
Since the quantum states of physical systems are represented by vectors in a Hilbert space $\hilbert$, these symmetries induce transformations of the Hilbert space $\hilbert$.

A quantum system is not uniquely represented by a state vector $\ket{\psi}\in\hilbert$, but by a ray in $\hilbert$. Therefore, the actions of the symmetry groups are in fact representations on the projective Hilbert space defined by the rays of $\hilbert$, $P\(\hilbert\):=\(\hilbert\setminus\{0\}\)/\sim$, where $\ket{\psi}\sim\ket{\psi'}$ if there is some $\lambda\in\C\setminus\{0\}$ so that $\ket{\psi}=\lambda\ket{\psi'}$.
Let us denote by $\rayket{\psi}:=\ket{\psi}/\sim\in P\(\hilbert\)$ the ray corresponding to a state vector $\ket{\psi}\in\hilbert$, and by $\raybra{\psi}=\rayket{\psi}^\dagger\in P(\hilbert^\ast)$ its dual, which is the ray of $\bra{\psi}\in\hilbert^*$.

Projective transformations preserve the ray scalar product, defined as
\begin{equation}
\label{eq:ray_product}
\raybraket{\psi_1}{\psi_2}:=\abs{\braket{\psi_1}{\psi_2}}.
\end{equation}

We are interested in the representations of a symmetry group $G$ on the projective Hilbert space $P\(\hilbert\)$.
Let $T_g$ be the symmetry transformation of $P\(\hilbert\)$ corresponding to the element $g\in G$. Since the action of each transformation $T_g$ has to preserve the transition probabilities, it follows that, for any $\rayket{\psi_1},\rayket{\psi_2}\in P\(\hilbert\)$,
\begin{equation}
\label{eq:preserve_transition_probabilities}
\raybraket{T_g\psi_1}{T_g\psi_2}=\raybraket{\psi_1}{\psi_2},
\end{equation}
where $\raybra{T_g\psi}:=\(T_g\rayket{\psi}\)^\dagger$.
 
Wigner's theorem shows that, for $\dim\hilbert\geq 2$, any transformation $T_g$ of $P\(\hilbert\)$ satisfying \eqref{eq:preserve_transition_probabilities} is induced by a transformation $\wh{T}_g$ of $\hilbert$ which is either unitary or \mbox{antiunitary \citep{Wigner1931GruppentheorieUndIhreAnwendungAufDieQuantenMechanikDerAtomspektren,Wigner1959GroupTheoryAndItsApplicationToTheQuantumMechanicsOfAtomicSpectra}.}
If $\dim\hilbert=1$, there is a unitary transformation and an antiunitary transformation.
A transformation $\wh{T}$ of a Hilbert space $\hilbert$ is antiunitary if $\braket{\wh{T}\psi_1}{\wh{T}\psi_2}=\braket{\psi_1}{\psi_2}^\ast$. An antiunitary transformation $\wh{T}$ is additive, i.e., $\wh{T}\(\ket{\psi_1}+\ket{\psi_2}\)=\wh{T}\ket{\psi_1}+\wh{T}\ket{\psi_2}$, and antilinear, i.e., $\wh{T}\(\lambda\ket{\psi_1}\)=\lambda^\ast\wh{T}\ket{\psi_1}$, for all $\ket{\psi_1},\ket{\psi_2}\in\hilbert$.

Given the projective transformation $T_g$, the corresponding unitary or antiunitary transformation $\wh{T}_g$ is unique up to a phase factor $e^{i\vartheta}$ for each $g\in G$. Since the product of two transformations, $\wh{T}_g$ and $\wh{T}_{g'}$, satisfies $\wh{T}_{gg'}=e^{i\vartheta(g,g')}\wh{T}_g \wh{T}_{g'}$ for some $\vartheta(g,g')\in\R$, the freedom given by $\vartheta$ is useful for ensuring that the projective representation corresponds to a representation of $G$ on $\hilbert$. 
This is not always possible, but Wigner showed that it is possible locally around the identity \citep{Wigner1939OnUnitaryRepresentationsOfTheInhomogeneousLorentzGroup}.

Bargmann showed that representations that induce the projective representations do not exist for the rotation group $\SO(3)$, the Lorentz group, and the Poincar\'e group, but that they do exist for the universal covers of these groups, and they induce the projective representations of these groups \citep{Bargmann1964NoteOnWignerTheoremSymmetryOperations}. The universal cover group of the proper orthochronous Lorentz group $\SO^\uparrow(1,3)$---which is the connected component of the Lorentz group containing the identity---is its double cover $\Spin(1,3)\equiv\SL(2\C)$ (which is the spin group). The full Lorentz group is recovered by combining it with the time reversal operator $T$ and the parity operator $P$, which generate a group $\{1,P,T,PT\}$ isomorphic to $\Z_2\times\Z_2$ (incidentally named the Klein four-group). If we denote by $\wt{\mc{P}}_0:=\R^{1,3}\rtimes \Spin(1,3)$ the universal cover of the connected component of the Poincar\'e group containing the identity, the universal cover of the full Poincar\'e group is $\wt{\mc{P}}:=\wt{\mc{P}}_0\times\Z_2\times\Z_2$.
The fact that, in this case, the universal cover is the double cover explains why the wavefunction of a spin $\frac 12$ particle changes its sign under a $2\pi$ rotation.

Thus, Wigner and Bargmann showed that the projective representations of the Poincar\'e group on a Hilbert space $\hilbert$ correspond to representations of its universal cover by unitary and antiunitary transformations on $\hilbert$, and that particles, both fermions and bosons, are classified by spin and mass. Moreover, the evolution equation for free quantum states follows automatically from the symmetry according to the spin and mass. In particular, for massive free spin $\frac 1 2$ particles, they recovered the Dirac equation.

Other properties of quantum particles, like electric charge, color, and weak isospin, require the existence of fiber bundles, whose fibers are representations of other groups like $\U(1)$, $\SU(2)$, and $\SU(3)$. The idea of symmetry is related in their case to the Maxwell and Yang--Mills equations, and the charges result as invariants of the representations. All these groups and the group  $\wt{\mc{P}}_0$ act on the full Hilbert space, and their actions commute with each other. This means that the Wigner classification and the classification due to gauge theory complement each other in giving full classifications of the particles of the Standard Model based on representations.

It is not evident that merely making use of symmetry groups and their representations in the way Wigner and Bargmann did actually is Klein geometry. The following Theorem clarifies the relation between the two.

\begin{frem}
\begin{theorem}
\label{thm:wigner_klein}
The particle wavefunctions in the Wigner classification are geometric objects on spacetime, as prescribed in Klein geometry.
\end{theorem}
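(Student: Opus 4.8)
The plan is to show that Wigner's construction of particle wavefunctions is literally an instance of the Klein-geometric framework assembled in the preceding subsection, by identifying the relevant group $G$, the homogeneous space it acts on, and the associated bundle whose sections are the wavefunctions. First I would recall that by Wigner and Bargmann the relevant symmetry group is the universal cover $\wt{\mc{P}}$ of the full Poincar\'e group, acting by unitary and antiunitary transformations on the Hilbert space $\hilbert$. The key conceptual move is to recognize that this is exactly the data of a Klein geometry in the sense of Definition \ref{def:klein_geometries}: the group $\wt{\mc{P}}$ together with a closed subgroup plays the role of $(G,H)$, and the Minkowski spacetime $\R^{1,3}$ arises as a coset space, precisely as exhibited for the Poincar\'e group in Example \ref{ex:homogeneous_space}.

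Next I would make explicit the bundle structure underlying Wigner's classification. A Wigner representation is built by the method of induced representations: one fixes a reference momentum, computes its stabilizer (the \emph{little group}) inside $\Spin(1,3)$, and the momentum orbit is the homogeneous space $G/\stab$ furnished by Theorem \ref{thm:orbit_stabilizer} and Proposition \ref{thm:homogeneous_space}. The wavefunction is then a section of the associated vector bundle over this orbit (equivalently, after Fourier transform, an equivariant field on spacetime), whose fiber $\V{}$ carries the finite-dimensional representation of the little group labeled by spin. This identifies the particle wavefunction as a section of a fiber bundle over spacetime with structure group contained in $\wt{\mc{P}}$, on which $\wt{\mc{P}}$ acts; by Remark \ref{rem:klein:type:space_additional_properties_geometrization} such sections are Type \ref*{type:space_additional_properties}$^\prime$ objects, hence objects on spacetime in Klein's sense. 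I would then invoke Observation \ref{obs:types_are_klein_geometry} to conclude that these are geometric objects of Klein geometry.

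Finally I would append the gauge degrees of freedom. The additional quantum numbers (electric charge, color, weak isospin) are carried by fibers transforming under the internal groups $\U(1)$, $\SU(2)$, $\SU(3)$, whose actions commute with that of $\wt{\mc{P}}_0$; the total bundle is a product of the spacetime-associated bundle with these internal bundles, and Remark \ref{rem:klein:type:space_additional_properties_geometrization} again certifies its sections as objects on spacetime. Combining the external (Poincar\'e) and internal (gauge) factors, the full particle wavefunction is a section of a single bundle over $\R^{1,3}$ on which the combined symmetry group acts, completing the identification as a Klein-geometric object.

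The main obstacle, to my eye, is not any single computation but the conceptual gap the theorem is meant to bridge: Wigner's method works with \emph{projective} unitary and antiunitary representations on an abstract Hilbert space, whereas Klein geometry in Definition \ref{def:klein_geometries} is phrased in terms of a group acting on a coset space and its associated bundles. The delicate step is to argue that passing from the group to its universal cover $\wt{\mc{P}}$ (needed, per Bargmann, to linearize the projective representations) is compatible with the coset-space picture, so that the double-cover subtlety responsible for the sign change of spin $\tfrac 12$ wavefunctions under $2\pi$ rotation does not obstruct the geometric interpretation but is rather absorbed into the choice of structure group of the bundle. I expect the cleanest route is to emphasize that the base manifold is unchanged by passing to the cover, while the fiber representation and the antiunitary time-reversal factor simply enlarge the structure group, leaving the Klein-geometric classification intact.
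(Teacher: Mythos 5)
Your proposal is correct in substance but takes a genuinely different route from the paper. The paper's proof applies Theorem \ref{thm:orbit_stabilizer} directly to the action of $\wt{\mc{P}}$ on the Hilbert space $\hilbert$ itself: it observes that this action decomposes $\hilbert$ into orbits, each orbit of a state $\rayket{\psi}$ being a homogeneous space identifiable with the coset space $\wt{\mc{P}}/\stab(\rayket{\psi})$, and then invokes Remark \ref{rem:klein:type:space_collection_subsets} to conclude that these orbits --- and hence the wavefunctions sitting inside them --- are Klein-geometric objects. In other words, the paper treats the wavefunction as a \emph{point} in a homogeneous space for the Poincar\'e cover, transforming ``rigidly'' within its orbit like a congruence class of figures in Euclidean geometry. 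You instead unpack Wigner's actual construction --- little groups, momentum orbits as $G/\stab$, induced representations --- and identify the wavefunction as a \emph{section} of an associated vector bundle over the momentum orbit (equivalently an equivariant field on spacetime), concluding via Remark \ref{rem:klein:type:space_additional_properties_geometrization} and Type \ref*{type:space_additional_properties}$^\prime$. Your route is closer to how the classification is built in practice and makes the bundle-over-spacetime structure explicit, which arguably gives a tighter link between the abstract coset spaces and the base manifold $\R^{1,3}$ than the paper's more compressed argument; the paper's route is shorter and avoids the machinery of induced representations entirely, needing only the orbit-stabilizer theorem and the decomposition of the representation into orbits. Your closing discussion of the double-cover and antiunitary subtleties addresses a point the paper handles only in the surrounding prose (in Sec.~\sref{s:wigner} and the paragraph following the theorem), not in the proof itself, so it is a welcome addition rather than a deviation.
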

\end{frem}
\begin{proof}
We apply the Orbit--Stabilizer Theorem (Theorem \ref{thm:orbit_stabilizer} from Appendix \ref{appendix:klein}) for the Wigner representation of the universal cover of the Poincar\'e group $\wt{\mc{P}}$ on the Hilbert space $\hilbert$. This action is not transitive, but, from Theorem \ref{thm:orbit_stabilizer}, all of its orbits are classified by the cosets in $\wt{\mc{P}}$ of the form $\stab(\rayket{\psi})$, where $\rayket{\psi}\in \hilbert$.

The representations of $\wt{\mc{P}}$ on $\hilbert$ are ultimately ``made of'' homogeneous spaces, which are identifiable with the orbits of the action of $\wt{\mc{P}}$, and, by Theorem \ref{thm:orbit_stabilizer}, with cosets of $\wt{\mc{P}}$. The Wigner representation is decomposable in an invariant way into orbits, which are homogeneous spaces for $\wt{\mc{P}}$ and therefore are isomorphic to coset spaces of $\wt{\mc{P}}$. By Remark \ref{rem:klein:type:space_collection_subsets}, they are objects of Klein geometry, so the particle wavefunctions in the Wigner classification are geometric objects on spacetime.
\end{proof}

From Theorem \ref{thm:wigner_klein}, the Hilbert space $\hilbert$ is decomposable into infinitely many orbits, all of which are of a dimension that is at most the dimension of the group $\wt{\mc{P}}$, which is $10$
. Since $\hilbert$ is infinite dimensional, and the coset spaces of $\wt{\mc{P}}$ are finite dimensional, it follows that $\hilbert$ is decomposable into infinitely many orbits. Each part of the wavefunction from an orbit is transformed ``rigidly'' (isometrically) by $\wt{\mc{P}}$ within its own orbit, just like rigid geometric objects transform in spacetime. At the same time, even if they transform in the same way, the structure of each part of the wavefunction is more complicated than that of subsets in the Minkowski spacetime. They cannot be interpreted as such subsets, but they can be understood as geometric objects of Type \ref{type:space_additional_properties_geometrization}, as in Theorem \ref{thm:wf_space}, Corollary \ref{thm:wf_space_qft}, and Theorem \ref{thm:qf_space}.

Theorem \ref{thm:wigner_klein} applies even when gauge fields and the corresponding charges are present, because the universal cover $\wt{\mc{P}}_0$, by commuting with the gauge groups, acts on the Hilbert space in the same way as $\wt{\mc{P}}$ in Theorem \ref{thm:wigner_klein}.
Here, we should take $\wt{\mc{P}}_0$ rather than the full group $\wt{\mc{P}}$ because the Standard Model is not invariant under the $P$, $T$, or $C$ (charge conjugation) transformations individually, but only under the combined $CPT$ transformation.

\section{Consistency with Experiments}
\label{s:empirical}

The consistency of the wavefunction being an object on space with experiments has been empirically verified countless times, implicitly. The following examples are intended to explain how experiments can be understood as being about wavefunctions as objects on space.

\begin{experiment}
\label{exp:rigid}
Consider any macroscopic rigid object, for example, a rock.
The only transformations it can undergo without breaking or changing its structure are spacetime isometries, i.e., Galilei transformations in the nonrelativistic case, and Poincar\'e transformations in the relativistic case. The rock is a rigid object, and its symmetries are reduced to isometries of spacetime. We perform this experiment implicitly numerous times every day, whenever we manipulate rigid objects. Yet the rock is made of atoms, and, in each atom, electrons and nucleons are entangled. Even in the Helium atom, at the ground-state energy level, the two electrons are entangled. In fact, already in Hydrogen there is entanglement between the electron and the proton, which can be seen in a more refined model than the standard {\schrod} solution in which the proton has infinite inertia and no size \citep{Tommasini1998HydrogenEntangledSystem}. This is even more the case for heavier atoms. Moreover, molecular and crystalline structures involve entanglement between the atoms. This is not macroscopic superposition as in the {\schrod} cat experiment, and it is not likely that it can be harnessed for quantum computing, but it clearly shows that even a rock is a quantum system whose wavefunction requires a high-dimensional configuration space.

However, the possible transformations we can apply to the rock without changing its structure are very limited.
While of course we can imagine all sorts of unitary transformations of the Hilbert space of the rock, the ones leaving its spatial structure unchanged are only phase changes, permutations of identical particles, gauge transformations, and the spacetime isometries. From these, the only ones we can apply to the system without changing its structure, and whose effects are observable, are the spacetime isometries, under which the rock transforms like a classical rigid object, as in Theorem \ref{thm:wigner_klein}.

One may object that, with all the entanglement already present in the atoms and molecules of a rock, this is still limited to a very small subset of the Hilbert space necessary to describe it. And, indeed, the total Hilbert space contains infinitely many more {\schrod}'s-cat-like states that we do not observe in reality. But even quantum superpositions of different ``rock states'' transform like rigid objects, because spacetime isometries apply simultaneously to each term in the superposition.
\end{experiment}

\begin{experiment}
\label{exp:quantum-computer}
Consider a quantum computer, or a laboratory in which a quantum experiment is performed. Let us assume it to be mobile, for example, that it is placed on a ship moving with constant velocity. Due to the principle of relativity, the experiment is not affected by this motion. In general, it is not affected by an isometry of spacetime, and this holds for both a quantum laboratory and a quantum computer. But, apart from transformations representing spacetime isometries, all other unitary or antiunitary transformations are either not observable or they affect the structure of the experiment or the computation.
Now, it is clearly true that interactions are not captured in the Wigner--Bargmann approach, which is based solely on the Poincar\'e group. But they are accounted for by gauge theory, and local gauge transformations commute with the spacetime isometries, which is why, even when interactions are present, the wavefunction is still an object on space.
\end{experiment}

\begin{experiment}
\label{exp:prepare-and-measure}
When we perform experiments, we manipulate the measuring apparatus and the system on which we experiment. This manipulation happens in space. The human experimentalists and the measuring devices, as well as the apparatuses utilized to prepare the observed system, are manipulated in space. Even when we perform quantum mechanics experiments, the preparation and the measurement take place in space. The observed system may consist of more subsystems in entanglement, and the experiment may involve spacelike separated measurements, like in the EPR experiment \citep{EPR35,Bohm1951TheParadoxOfEinsteinRosenAndPodolsky}. But even in this case, we measure the state of the observed system in space. Experimental protocols include avoiding the contamination of the experiment with other systems, for example, when we detect interference or measure the spin of an atom, we have to make sure that the particle or atom is the intended one and not an intruder. The reason is that, if more particles arrive at the same region of space, the detection can be compromised.
The detector is unable to distinguish which copy of the three-dimensional space from the configuration space is the one corresponding to the observed particle, even though the wavefunction of all particles is interpreted as propagating on the configuration space. This means that, ultimately, the detection confirms that the particle occupies the same space as the detector, and not a different copy of space in the configuration space. This is true, in general, for all interactions between particles.
\end{experiment}

\section{Discussion}
\label{s:discussion}

We have seen that there are several ways to regard the wavefunction as being completely equivalent, from the point of view of the data it encodes, to geometric structures or fields that, in geometry and in classical physics, are considered objects on space.

\begin{remark}
\label{rem:fiber-bundle-nonlocal}
Some of these used fields are sections of fiber bundles. Even if the fields and connections of the fiber bundles are local objects, they can have nonlocal properties.
An example is the case of the M\"obius strip, which locally looks like a rectangle $(a,b)\times \R$ but, globally, has a twist, which makes it into a non-orientable surface. Even so, the sections of the M\"obius strip, which are just like the lines that one may draw when trying to draw a path around the loop, are fields on the base curve used to construct the M\"obius strip.

Other kinds of nonlocal properties can arise from such local structures because of the connection. A connection can be used to realize the parallel transport of a vector or tensor from a fiber along a curve. In the case of electromagnetism, this leads to an interesting nonlocal effect called the Aharonov--Bohm effect {\citep{AharonovBohm59SignificanceOfElectromagneticPotentialsInQuantumTheory}}.

This emergence of nonlocal properties from local geometric objects was used to represent the wavefunction as a local infinite-dimensional field on space in {\citep{Stoica2019RepresentationOfWavefunctionOn3D}}. This field contains all the information encoded in the wavefunction. However, for the wavefunction, being defined on the configuration space, when we try to understand it on space, it seems that the parts are unable to contain all the information about the whole wavefunction.
Is this holistic property shared by the infinite-dimensional field used in {\citep{Stoica2019RepresentationOfWavefunctionOn3D}}?
Indeed, it is shared, as was shown in \S III B in {\citep{Stoica2019RepresentationOfWavefunctionOn3D}}. We can split the space into regions, and split the infinite-dimensional field into its restrictions to these regions, and yet we can still recover the entire wavefunction. For this, a connection is used, which glues the local fiber bundles over each region back, recovering the fiber bundle over the entire space.
The more direct constructions presented in this article, which are based on sections of fiber bundles, share the same feature because it is a feature of the bundles. We will not even need to define a sophisticated connection as in {\citep{Stoica2019RepresentationOfWavefunctionOn3D}} because the fiber bundles used here are products, and are endowed with a natural connection. In the same way, we do not need to be explicit about the connection when derivating fields on a flat space as in Newtonian mechanics.
\end{remark}

The purpose of this research was to show why, from a geometric point of view, the wavefunction is an object on space. The purpose was for ontological reasons only, because, in most applications, the wavefunction on the configuration space is easier to use, even if only because the bulk of the literature and developments use it.

Some authors claim that the fact that the wavefunction is, usually and most easily, represented as defined and as propagating on the configuration space means that quantum systems cannot be on the three-dimensional space, and, in particular, objects like chairs and tables cannot make sense in a theory in which only the wavefunction exists and is governed by the {\schrod} dynamics \citep{Bell2004QuantumMechanicsForCosmologists,Maudlin2010CanTheWorldBeOnlyWavefunction,Maudlin2019PhilosophyofPhysicsQuantumTheory,Norsen2017FoundationsQM}.
This argument is often used to claim that the wavefunction needs to be supplemented either by point-particles, as in the pilot-wave theory \citep{Bohm1952SuggestedInterpretationOfQuantumMechanicsInTermsOfHiddenVariables}, or the positions of the spontaneous localization as in the GRW proposal \citep{GhirardiRiminiWeber1986GRWInterpretation} with the flash ontology (GRWf) \citep{Bell2004SpeakableFlashOntology}. Furthermore, it is argued that, by failing to provide such spatial features, wavefunction-only interpretations like Everett's interpretation \citep{Everett1957RelativeStateFormulationOfQuantumMechanics,Everett1973TheTheoryOfTheUniversalWaveFunction,Wallace2012TheEmergentMultiverseQuantumTheoryEverettInterpretation} and even the GRW interpretation with the mass-density ontology (GRWm) are insufficient.

Let us consider first how the pilot-wave theory is supposed to address the problem of how objects like chairs and tables are localized in space. The claim is that the point-particles do this job naturally, since they are located in space. But it is not sufficient for the point-particles to be located in a configuration of positions in space that look like a chair---the chair configuration should remain a chair configuration after a reasonable amount of time has passed, assuming nothing violent happens in the meantime. And since the point-particles are guided by the wavefunction, and since the probability of the point-particle configuration being located at a certain set of positions is determined by the wavefunction's amplitude at the point in the configuration space corresponding to that set of positions, the wavefunction itself should retain a stable ``chair-like'' configuration. So, to solve the problem for point-particles, we should solve it first for the wavefunction itself, and the pilot-wave theory turns out to have the same problem, attributed by these authors, for instance, to Everett's interpretation, which has to appeal to decoherence. In fact, Bohm realized this, and anticipated the necessity of decoherence for the pilot-wave theory \citep{Bohm1952SuggestedInterpretationOfQuantumMechanicsInTermsOfHiddenVariables}.

In the case of the GRWf interpretation, decoherence seems to be unnecessary. But the probability that the collapse localizes the wavefunction at a certain point in the configuration space, corresponding closely to a chair-like configuration of points in space, also depends on the wavefunction's amplitude at that point in the configuration space. To ensure that successive spontaneous localizations, and therefore ``flashes'', happen in such a way that the chair is stable, the wavefunction itself must be ``chair-like'', and the spontaneous collapse should be sharp, but not too sharp to sabotage this. Banning the wavefunction from the ontology of the GRW interpretation and retaining only the flash ontology does not change the need for the wavefunction to be ``chair-like''.

So, if the problem of the manifest image exists for standard quantum mechanics or for Everett's interpretation, it exists as well for the pilot-wave theory and GRW theory, regardless of the additional elements like point-particles or positions of the flashes.

The pilot-wave theory's point-particles and the GRWf flashes are called local ``beables'' by Bell \citep{Bell2004SpeakableUnspeakable}.
These are the elements of the theory that are localized in space.
Can standard quantum mechanics have local beables?
Bell himself considers that they have---and he gives as an example “the settings of switches and knobs and currents needed to prepare the initial unstable nucleus” \citep{Bell2004TheTheoryOfLocalBeables}.
But these are not the only local beables.
Suppose we measure a complete set of commuting observables of an elementary particle and find definite values. The wavefunction of the particle is then separable for the rest of the universe, so it can be understood straightforwardly as an object on space.
But if we measure an incomplete set of observables, the wavefunction is not completely determined, and the particle can be entangled with other systems. Then, it will be described by the reduced density matrix, and it will not be separable. Yet something will still be localized, namely, the observed properties, so they can be understood as local beables. This also applies to the observation of composite systems, for example, atoms, since, even though they consist of entangled particles, the observed properties of the atom are localized.
And the same happens in Everett's interpretation, where each branch of the wavefunction is exactly like the wavefunction that remains after the projection in standard quantum mechanics.
Another kind of local beable is given by the reduced density matrix of the quantum fields at every point of space \citep{WallaceTimpson2010QMOnSpacetime}.
All these kinds of local beables are just as good as the point-particles and the flashes.

But is it possible that all the information contained in a generic wavefunction can be contained in a set of local beables?
Even if the answer were negative, the same problem exists in all interpretations of quantum mechanics, including the pilot-wave theory and the GRW interpretation, since they all include the wavefunction.

Fortunately for all these interpretations, in the case that the manifest image really requires that the underlying ontology is spatial, we have seen in this article that the wavefunction as it is is indeed an object on space, in the same sense in which geometric objects are objects on space, both in Euclidean geometry and in Klein geometry, and that this is consistent with the empirical data.
Even so, the wavefunction is not explicitly a field on space, but why should we expect this? If this is important, it was previously shown that it can be faithfully represented as an infinite-dimensional field on space \citep{Stoica2019RepresentationOfWavefunctionOn3D}, and another way to see this was proposed in \citep{Stoica2023TheRelationWavefunction3DSpaceMWILocalBeablesProbabilities}. Even without such a representation, here, we made it clear that it already is an object on space, like classical objects or objects in Euclidean geometry or in Klein's Erlangen Program.

Does this imply that we can be handed a wavefunction and be able to draw on a piece of paper, based on it only, the tables and the chairs \citep{Albert2019How2TeachQM}? We can try to draw them, but the image will be very complicated, because, while this article shows that the wavefunction is qualitatively not different from other geometric objects like triangles, it is infinitely more complex. On the other hand, existing accounts based on the wavefunction being an object in the configuration space are able to explain very well the manifest image \citep{Wallace2012TheEmergentMultiverseQuantumTheoryEverettInterpretation}, in particular, if the fact that interactions are local in three-dimensional space is taken into account as in \citep{Vaidman2022WaveFunctionRealismAnd3Dimensions}. And the resulting picture is physically intuitive, without having to rely on the fact that the wavefunction is, in any sense, an object on space. However, proving a three-dimensional grounding of the wavefunction may help remove the anxiety of being lost in the configuration space that may be responsible for the feeling that the wavefunction is not enough to account for the physical world. The wavefunction provides a complete description of reality \citep{Vaidman2016AllIsPsi}.

%
%
%
%
%
%
%
%
%
%
%


\textbf{Acknowledgments.}
The author thanks Scott Aaronson, Eddy Keming Chen, and reviewers for their valuable comments and suggestions offered for a previous version of the manuscript. Nevertheless, the author bears full responsibility for the article.



\appendix

\section{Technical Details about Klein Geometry}
\label{appendix:klein}

In this Appendix, we will review the mathematical grounds of Klein geometry in a more precise way.
\begin{definition}[Orbits and stabilizers]
\label{def:orbit_stabilizer}
Let $(G,S,\cdot)$ be a left action of a group $G$ on a set $S$.
The orbit of a point $s\in S$ is the subset $\orb(s)\subseteq S$ of $S$ defined as
\begin{equation}
\label{eq:orb}
\orb(s):=\{g\cdot s\in S|g\in G\}.
\end{equation}

The stabilizer subgroup $\stab(s)$ (also isotropy group) of $G$ with respect to $s$ is defined as
\begin{equation}
\label{eq:stab}
\stab(s):=\{g\in G|g\cdot s=s\}.
\end{equation}
\end{definition}


\begin{frem}
\begin{theorem}[Orbit--Stabilizer Theorem. See, e.g., \citealp{Helgason1979DifferentialGeometryLieGroupsSymmetricSpaces}]
\label{thm:orbit_stabilizer}
Let $s\in S$. Then, there is a one-to-one map between the left cosets $G/\stab(s)$ and the elements of the orbit of $s$, defined by
\begin{equation}
\label{eq:orb-stab}
\begin{split}
&\varphi:G/\stab(s)\to\orb(s)\\
&\varphi\(g\stab(s)\)=g\cdot s.
\end{split}
\end{equation}

The map $\varphi$ defines an action isomorphism between the left action of $G$ on its left coset space and the action of $G$ on $S$. If $G$ is a Lie group, $\varphi$ is a diffeomorphism.
\end{theorem}
\end{frem}
\begin{proof}
See, for example, Theorem 3.2 (p. 121), Theorem 4.2 (p. 123), and Proposition 4.3 (p. 124) in \citep{Helgason1979DifferentialGeometryLieGroupsSymmetricSpaces}.
\end{proof}


\begin{definition}[Klein geometries]
\label{def:klein_geometries}
A Klein geometry $(G,H)$ consists of a Lie group $G$ and a topologically closed subgroup $H$ of $G$ \citep{Sharpe2000DifferentialGeometryCartansGeneralizationOfKleinsErlangenProgram}. The left coset space $G/H=\{gH|g\in G\}$ is identified as the space on which $G$ acts. This results in a principal fiber bundle $G\mapsto G/H$ with typical fiber and structure group $H$ (\citealp{Cohen1998TopologyOfFiberBundles}, Cor. 1.4).
\end{definition}

\begin{frem}
\begin{proposition}
\label{thm:homogeneous_space}
Any homogeneous space $S$ of a group $G$ can be obtained as the coset space $G/H$, where the subgroup $H$ of $G$ is the stabilizer of a point $s\in S$.
\end{proposition}
\end{frem}
\begin{proof}
Follows from Theorem \ref{thm:orbit_stabilizer}.
\end{proof}

\begin{example}
\label{ex:homogeneous_space}
For the Minkowski spacetime, $G$ is the Poincar\'e group $\R^{1,3}\rtimes\OO(1,3)$, where $\R^{1,3}$ is, in fact, the vector space $\R^4$, and $H$ is the Lorentz group $\OO(1,3)$, the group of transformations of the vector space $\R^{1,3}$ which preserves the symmetric bilinear form
\begin{equation}
\label{eq:poincare_matrix}
\eta=\begin{pmatrix}
1 & 0 & 0 & 0 \\
0 & -1 & 0 & 0 \\
0 & 0 & -1 & 0 \\
0 & 0 & 0 & -1 \\
\end{pmatrix}.
\end{equation}

An element of the Poincar\'e group has the form $(\mathbf{v},L)$, where $\mathbf{v}\in\R^{1,3}$ corresponds to a translation by a vector $\mathbf{v}$, and $L\in\OO(1,3)$.
The Poincar\'e group admits the matrix representation $\lambda:\R^{1,3}\rtimes\OO(1,3)\to\mathcal{M}(5,5)$,
\begin{equation}
\label{eq:poincare_matrix_block}
\lambda(\mathbf{v},L)=\begin{pmatrix}
1 & 0 \\
\mathbf{v} & L 
\end{pmatrix}.
\end{equation}

The element corresponding to the translation with a vector $\mathbf{v}$ has the form
$\lambda(\mathbf{v},I_4)=\begin{pmatrix}
1 & 0 \\
\mathbf{v} & I_4 
\end{pmatrix}$. The element corresponding to the Lorentz transformation $L$
has the form
$\lambda(\mathbf{0},L)=\begin{pmatrix}
1 & 0 \\
\mathbf{0} & L 
\end{pmatrix}$.

Let us denote the subgroup of translations and the subgroup of Lorentz transformations by
$\lambda(\R^{1,3},I_4):=
\begin{pmatrix}
1 & 0 \\
\R^{1,3} & I_4 
\end{pmatrix}$
and 
$\lambda(\mathbf{0},\OO(1,3)):=
\begin{pmatrix}
1 & 0 \\
\mathbf{0} & \OO(1,3) 
\end{pmatrix}$.
Then, the left cosets from $G/H$ in Definition \ref{def:klein_geometries} have the form $\lambda(\mathbf{v},L)\(\mathbf{0},\OO(1,3)\)$; therefore, they have the matrix form
\begin{equation}
\label{eq:poincare_cosets}
\begin{pmatrix}
1 & 0 \\
\mathbf{v} & L 
\end{pmatrix}
\begin{pmatrix}
1 & 0 \\
\mathbf{0} & \OO(1,3) 
\end{pmatrix}
=
\begin{pmatrix}
1 & 0 \\
\mathbf{v} & \OO(1,3) 
\end{pmatrix}.
\end{equation}

Since, in the RHS of Equation \eqref{eq:poincare_cosets}, the only free parameter is $\mathbf{v}$, the cosets $\begin{pmatrix}
1 & 0 \\
\mathbf{v} & \OO(1,3) 
\end{pmatrix}$ form a four-dimensional space identifiable with the Minkowski spacetime $\R^{1,3}$.

Similar interpretations as left coset spaces can be given for the Euclidean geometry on $\R^n$, and for the hyperbolic, elliptic, projective, affine, and conformal geometries.
\end{example}

By applying Definition \ref{def:klein_geometries} to the Poincar\'e group $\R^{1,3}\rtimes O(1,3)$, we obtain the fiber bundle $\R^{1,3}\rtimes O(1,3)\mapsto\R^{1,3}$, whose base space is the Minkowski spacetime, and whose typical fiber is diffeomorphic with the Lorentz group.
Since the Poincar\'e group acts on this bundle, it also transforms its sections and, in fact, all of its subsets, so they form representations.

\addcontentsline{toc}{section}{\refname}


\end{document}